\title{Lower Bounds for Protrusion Replacement by Counting Equivalence Classes\footnote{This work was supported by NWO Veni grant ``Frontiers in Parameterized Preprocessing'' and NWO
Gravitation grant ``Networks''. An extended abstract of this work appeared in the proceedings of the 11th International Symposium on Parameterized and Exact Computation (IPEC 2016).}}
\author[1]{Bart M.\,P. Jansen}
\author[2]{Jules J.\,H.\,M. Wulms}
\affil[1]{Eindhoven University of Technology\\%New line here is not in the sample file (however removing it does not help since the address is too long anyway)
  P.O. Box 513, Eindhoven, The Netherlands\\
  \texttt{b.m.p.jansen@tue.nl}}
\affil[2]{Eindhoven University of Technology\\
  P.O. Box 513, Eindhoven, The Netherlands\\
 \texttt{j.j.h.m.wulms@tue.nl}}
\authorrunning{B.\,M.\,P. Jansen and J.\,J.\,H.\,M.\, Wulms} %mandatory. First: Use abbreviated first/middle names. Second (only in severe cases): Use first author plus 'et. al.'
\subjclass{G.2.1 Combinatorics, G.2.2 Graph Theory}% mandatory: Please choose ACM 1998 classifications from http://www.acm.org/about/class/ccs98-html . E.g., cite as "F.1.1 Models of Computation". 
\keywords{protrusions, boundaried graphs, independent set, equivalence classes, finite integer index}% mandatory: Please provide 1-5 keywords
\theoremstyle{plain}
\newtheorem{claim}{Claim}
\newtheorem{proposition}[theorem]{Proposition}
\newtheorem{observation}[theorem]{Observation}
\newcommand{\IndependentSet}{\textsc{Independent Set}\xspace}
\newcommand{\VertexCover}{\textsc{Vertex Cover}\xspace}
\newcommand{\DominatingSet}{\textsc{Dominating Set}\xspace}
\newcommand{\IS}{\ensuremath{\mathrm{\textsc{is}}}\xspace}
\newcommand{\VC}{\ensuremath{\mathrm{\textsc{vc}}}\xspace}
\newcommand{\DS}{\ensuremath{\mathrm{\textsc{ds}}}\xspace}
\newcommand{\s}{\ensuremath{\mathfrak{s}}\xspace}
\newcommand{\Oh}{\ensuremath{\mathcal{O}}\xspace}
\newcommand{\tw}{\ensuremath{\textbf{tw}}\xspace}
\newcommand{\pw}{\ensuremath{\textbf{pw}}\xspace}
\newcommand{\ett}{\ensuremath{\triangle}\xspace}
\newcommand{\optis}{\ensuremath{\textsc{opt}_{\textsc{\IS}}}\xspace}
\newcommand{\optvc}{\ensuremath{\textsc{opt}_{\textsc{\VC}}}\xspace}
\newcommand{\optds}{\ensuremath{\textsc{opt}_{\textsc{\DS}}}\xspace}
\newcommand{\opt}{\ensuremath{\textsc{opt}}\xspace}
\newcommand{\bigmid}{\,\big\vert\,}
\newcommand{\vstart}{\ensuremath{v_{\mathrm{start}}}\xspace}
\newcommand{\vend}{\ensuremath{v_{\mathrm{end}}}\xspace}
\newenvironment{claimproof}[1][\proofname]{\begin{proof}[#1]\renewcommand{\qedsymbol}{\claimqed}}{\end{proof}\renewcommand{\qedsymbol}{\plainqed}}
\let\plainqed\qedsymbol
\begin{document}

\maketitle

\begin{abstract}
Garnero et al.~[SIAM J.\, Discrete Math.\, 2015, 29(4):1864--1894] recently introduced a framework based on dynamic programming to make applications of the \emph{protrusion replacement} technique constructive and to obtain explicit upper bounds on the involved constants. They show that for several graph problems, for every boundary size~$t$ one can find an explicit set~$\mathcal{R}_t$ of \emph{representatives}. Any subgraph~$H$ with a boundary of size~$t$ can be replaced with a representative~$H' \in \mathcal{R}_t$ such that the effect of this replacement on the optimum can be deduced from~$H$ and~$H'$ alone. Their upper bounds on the size of the graphs in~$\mathcal{R}_t$ grow triple-exponentially with~$t$. In this paper we complement their results by lower bounds on the sizes of representatives, in terms of the boundary size~$t$. For example, we show that each set of planar representatives~$\mathcal{R}_t$ for \textsc{Independent Set} or \textsc{Dominating Set} contains a graph with~$\Omega(2^t / \sqrt{4t})$ vertices. This lower bound even holds for sets that only represent the planar subgraphs of bounded pathwidth. To obtain our results we provide a lower bound on the number of equivalence classes of the canonical equivalence relation for \IndependentSet on $t$-boundaried graphs. We also find an elegant characterization of the number of equivalence classes in general graphs, in terms of the number of monotone functions of a certain kind. Our results show that the number of equivalence classes is at most~$2^{2^t}$, improving on earlier bounds of the form~$(t+1)^{2^t}$.
\end{abstract}

\section{Introduction}

Protrusion replacement is a versatile tool for attacking optimization problems on graphs. When applied to solve an optimization problem on a graph~$G$, the main idea is the following: repeatedly replace a \emph{protrusion} subgraph~$H \subseteq G$ that interacts with the rest of~$G$ through a small boundary, by a smaller \emph{representative} subgraph~$H'$. Suppose that we can ensure that (i) the change~$\Delta$ in the optimum caused by this replacement only depends on~$H$ and~$H'$, and that (ii)~we can efficiently analyze~$H$ to find a suitable replacement~$H'$ and the corresponding~$\Delta$. Then we can solve the problem on~$G$ by solving it on the smaller graph and adding~$\Delta$ to the final result. In recent years, protrusion replacement has been applied to obtain approximation algorithms~\cite{FominLMPS16,FominLMS12}, kernelization algorithms~\cite{BodlaenderFLPST09,FominLMPS16,FominLMS12,GajarskyHOORRVS13,KimLPRRSS16}, and fixed-parameter tractable algorithms~\cite{FominLMS12,KimLPRRSS16}. The generality of protrusion replacement comes at a price: it often results in proofs that efficient algorithms of a certain type \emph{exist}, without showing explicitly how such algorithms can be \emph{constructed} and without giving any explicit bounds on the constant factors involved in the analysis. This non-constructivity stems from the use of a property called \emph{finite integer index} (FII, defined below). It is used to argue that for every constant boundary size~$t$, there is a finite set of representatives~$\mathcal{R}_t$ such that any $t$-boundaried subgraph~$H$ can safely be replaced by some representative~$H' \in \mathcal{R}_t$, as described above. The key issue is that FII only guarantees that a finite set of representatives exist, without showing how to find it, how large the set is, or how many vertices the representative subgraphs have.

To deal with the issue of non-constructivity, Garnero et al.~\cite{GarneroPST15} introduced a framework based on dynamic programming. They showed that explicit bounds for the sizes of representatives can be obtained by analyzing the number of states required to solve the problem on graphs of bounded treewidth. By presenting explicit dynamic programming algorithms for problems such as $r$-\IndependentSet and $r$-\DominatingSet, they were able to derive upper bounds on the size of representatives in terms of the boundary size~$t$. These upper bounds grow very quickly with~$t$, in some cases triple-exponentially. Garnero et al.~\cite[\S 7]{GarneroPST15} suggest to examine to what extent this exponential dependance is unavoidable. We pursue this direction by presenting lower bounds.

\subparagraph{Boundaried graphs and equivalence}
To state our results we have to introduce some terminology.\footnote{To avoid an abundance of cumbersome definitions, our terminology differs slightly from that in earlier work (cf.~\cite{BodlaenderFLPST13,BodlaenderF01},~\cite[\S 2]{Fluiter97}). In particular, we do not allow $t$-boundaried graphs with fewer than~$t$ boundary vertices. The fact that we consider optimization problems as in~\cite{Fluiter97}, rather than decision problems as in~\cite{BodlaenderFLPST09,GarneroPST15}, forms no essential difference; our lower bounds also apply to those settings.} We only consider undirected, finite, simple graphs. Let~$t$ be a positive integer. A $t$-boundaried graph~$G$ consists of a vertex set~$V(G)$, an edge set~$E(G) \subseteq \binom{V(G)}{2}$, and an injective labeling~$\lambda_G \colon \{1, \ldots, t\} \to 
V(G)$ that identifies~$t$ distinct \emph{boundary vertices} in the graph. The \emph{boundary} of the graph is the set~$B_G := \{\lambda_G(1), \ldots, \lambda_G(t)\}$. Two $t$-boundaried graphs~$G$ and~$H$ can be \emph{glued} together on their boundary, resulting in the boundaried graph~$G \oplus H$ that is obtained from the disjoint union of~$G$ and~$H$ by identifying corresponding boundary vertices and removing any parallel edges that are introduced. That is, we merge~$\lambda_G(i)$ with~$\lambda_H(i)$ for each~$i \in [t]$. An \emph{optimization problem}~$\Pi$ on graphs assigns to every (unboundaried) graph~$G$ an optimal solution value~$\Pi(G) \in \mathbb{Z}$. We will also write~$\Pi(G)$ for a boundaried graph~$G$ to denote the optimum of the underlying unboundaried graph. Two $t$-boundaried graphs~$G$ and~$H$ are \emph{equivalent} with respect to~$\Pi$, denoted~$G \equiv_{\Pi,t} H$, if there exists a \emph{transposition constant}~$\Delta \in \mathbb{Z}$ such that for every $t$-boundaried graph~$F$:
\begin{equation} \label{eq:equivalence}
\Pi(G \oplus F) = \Pi(H \oplus F) + \Delta.$$
\end{equation}
It is easy to see that~$\equiv_{\Pi,t}$ is an equivalence relation. Problem~$\Pi$ has \emph{finite integer index} if~$\equiv_{\Pi,t}$ has a finite number of equivalence classes for each fixed~$t$. In the remainder, we omit the subscript~$t$ when it is clear from the context. Observe that these notions formalize the idea behind protrusion replacement sketched above: if $G \equiv_{\Pi,t} H$, then replacing~$G$ by~$H$ changes the optimum by exactly~$\Delta$.

\subparagraph{Our results}
We analyze the canonical equivalence relation~$\equiv_{\IS,t}$ on $t$-boundaried graphs for the \IndependentSet (\IS) problem, which asks for the maximum size of an independent set of pairwise non-adjacent vertices. We focus on \IndependentSet due to its simple combinatorial structure and develop most of the ideas in that context. Afterward, we will use a simple reduction to transfer these lower bounds to the \DominatingSet problem; the lower bounds described below also apply to \DominatingSet. 

Define a \emph{set of representatives} for~$\equiv_{\IS,t}$ to be a set~$\mathcal{R}_t$ of $t$-boundaried graphs, such that for every $t$-boundaried graph~$G$ there exists~$H \in \mathcal{R}_t$ with~$G \equiv_{\IS,t} H$. Let the \emph{critical size} of a set of representatives be the number of vertices of its largest graph. We aim to give a lower bound on the critical size of any set of representatives for \IndependentSet in terms of~$t$. Our approach consists of two steps. First, we construct a large set of pairwise nonequivalent graphs to give a lower bound on the number of equivalence classes of~$\equiv_{\IS,t}$. Then we use a counting argument to leverage this into a lower bound on the critical size. Observe that each equivalence class must be represented by a different graph. It follows that if the number of distinct $t$-boundaried graphs with at most~$s$ vertices is smaller than the number of equivalence classes, then the critical size of any set of representatives must be larger than~$s$ to give each class a distinct representative. By relating the number of small graphs to the number of equivalence classes, we therefore obtain the desired lower bounds.

Protrusion replacement is often applied in the context of restricted graph classes, where the protrusions to be replaced are known to have bounded treewidth and may even belong to a family of embeddable graphs such as planar graphs. With these application areas in mind, we develop our lower bounds to apply even when we wish only to have a representative for each equivalence class that contains a planar graph whose treewidth is~$t + \Oh(1)$, for boundary size~$t$. To find a large set of nonequivalent graphs we adapt a construction of Lokshtanov et al.~\cite{LokshtanovMS11a}, which they used to prove that \IndependentSet on graphs of treewidth~$w$ cannot be solved in time~$\Oh^*((2-\varepsilon)^w)$ for any~$\varepsilon > 0$ unless the Strong Exponential Time Hypothesis fails. We show that the graphs they construct can be made planar while increasing the treewidth (and in fact the pathwidth) by only a small additive term. More importantly, we show how to use this adapted construction to build a set of~$M(t) - 2$ planar graphs of small treewidth which are pairwise nonequivalent under~$\equiv_{\IS,t}$, for all~$t$. The term~$M(t) \geq 2^{\binom{t}{\lfloor t/2 \rfloor}} \geq 2^{2^t/\sqrt{4t}}$ denotes the $t$-th \emph{Dedekind number}, which counts the number of \emph{monotone Boolean functions} of~$t$ variables. The number of equivalence classes therefore grows double-exponentially with~$t$. Using the counting argument above, this allows us to give a lower bound of~$\Omega(\log M(t)) \geq \Omega(2^t / \sqrt{4t})$ on the critical size of any set of planar representatives for the equivalence classes of~$\equiv_{\IS,t}$ that contain a planar graph of bounded pathwidth.

While developing a lower bound on the number of equivalence classes for planar graphs of bounded pathwidth, we also found an exact characterization of the number of equivalence classes of~$\equiv_{\IS,t}$ in general. We define a natural class of functions from~$\{0,1\}^t$ to~$\mathbb{N}$ that we call \emph{$t$-representative functions}. We give a bijection between the $t$-representative functions and the equivalence classes of~$\equiv_{\IS,t}$ for $t$-boundaried graphs. As we will show that all monotone Boolean functions which are not constantly zero yield a distinct $t$-representative function, this gives a lower bound of~$M(t) - 1$ on the number of equivalence classes of~$\equiv_{\IS,t}$. On the other hand, we show that the number of such functions is at most~$2^{2^t-1}$. The double-exponential lower bound for the number of equivalence classes containing a bounded-pathwidth planar graph is therefore not far off from the upper bound of~$2^{2^t-1}$ in general graphs. The fact that the base of the double-exponential in this expression is independent of~$t$ is noteworthy. The naive way to bound the number of equivalence classes is to associate a table to each $t$-boundaried graph. For each subset~$S$ of the boundary vertices~$B$, the table stores the maximum size of an independent set containing no vertex of~$B \setminus S$. There are at most~$t+1$ distinct values in such a table, and two boundaried graphs whose tables differ in the same universal constant in all positions are easily shown to be equivalent. As there are~$2^t$ entries in the table, and~$t+1$ different options per entry, this gives an upper bound of~$(t+1)^{2^t}$ on the number of equivalence classes. Garnero et al.~\cite[Lemma 3.7]{GarneroPST15} obtain the same bound using a subtly different definition for the table. Our result of~$2^{2^t-1}$ yields a slight improvement.

\subparagraph{Organization}
In Section~\ref{sec:preliminaries} we present preliminaries on graphs and Boolean functions. Section~\ref{sec:characterizingclasses} presents a simple characterization of the equivalences of~$\equiv_{\IS}$ in terms of \emph{$t$-representative Boolean functions}, thereby providing an upper bound on the number of equivalence classes. In Section~\ref{sec:defining:general} we show that for each $t$-representative Boolean function~$f$, one can construct a $t$-boundaried graph whose equivalence class under~$\equiv_{\IS}$ encodes~$f$, thereby establishing a bijection between equivalence classes and $t$-representative functions. In Section~\ref{sec:counting} we give upper and lower bounds on the number of $t$-representative functions, and thereby on the number of equivalence classes of~$\equiv_{\IS}$. We construct a large set of $\equiv_{\IS}$-nonequivalent \emph{planar} graphs in Section~\ref{sec:treewidthgraphs}. This construction is combined with a counting argument in Section~\ref{sec:lowerbounds:protrusions} to give a lower bound on the critical size of representatives for \IndependentSet. Using a simple reduction, these lower bounds are transferred to \DominatingSet in Section~\ref{sec:domset}.

\section{Preliminaries}\label{sec:definitions} \label{sec:preliminaries}
We use~$\mathbb{N}$ to denote the natural numbers, including~$0$. For a positive integer~$n$ and a set~$X$ we use~$\binom{X}{n}$ to denote the collection of all subsets of~$X$ of size~$n$. The \emph{power set} of~$X$ is denoted~$2^{X}$. The set~$\{1, \ldots, n\}$ is abbreviated as~$[n]$. A \emph{Boolean function} is a function of the form~$f \colon \{0,1\}^n \to \{0,1\}$. We sometimes use the equivalent view that a Boolean function assigns a $0/1$-value to every subset~$S \subseteq [n]$, which is the value of~$f$ when the arguments whose index is in~$S$ are set to~$1$ and the remaining arguments are set to~$0$. A Boolean function~$f \colon 2^{[n]} \to \{0,1\}$ is \emph{monotone} if~$f(S') \leq f(S)$ whenever~$S' \subseteq S \subseteq [n]$. We will call Boolean functions in this form set-functions, and may replace~$[n]$ by other finite sets of ordered elements. A formula in conjunctive normal form (CNF) is \emph{monotone} if no literal appears negated. 

\begin{proposition} \label{prop:monotone:cnf}
For every non-constant monotone Boolean set-function~$f \colon 2^{[n]} \to \{0,1\}$ there is a monotone CNF formula~$\phi$ such that for all~$x_1, \ldots, x_n \in \{0,1\}^n$ we have~$\phi(x_1, \ldots, x_n) = 1$ if and only if~$f(\{i \mid x_i = 1\}) = 1$.
\end{proposition}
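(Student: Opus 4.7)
The plan is to exhibit an explicit monotone CNF formula whose clauses are indexed by the inclusion-maximal ``false'' sets of $f$. Since $f$ is monotone and non-constant, we must have $f(\emptyset) = 0$ and $f([n]) = 1$; hence the family $\mathcal{F}$ of inclusion-maximal subsets $F \subseteq [n]$ with $f(F) = 0$ is non-empty and contains only proper subsets of $[n]$. My first step would be to establish the duality statement that $f(S) = 0$ if and only if $S$ is contained in some $F \in \mathcal{F}$: the backward direction is immediate from monotonicity, since $S \subseteq F$ and $f(F) = 0$ forces $f(S) = 0$; the forward direction is a short greedy argument that adds elements to $S$ one by one as long as the $f$-value $0$ can be preserved, terminating at an element of $\mathcal{F}$.

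Next I would set
$$\phi(x_1, \ldots, x_n) \;:=\; \bigwedge_{F \in \mathcal{F}} \bigvee_{i \in [n] \setminus F} x_i,$$
and observe that this is a conjunction of disjunctions in which no variable ever appears negated; hence $\phi$ is a monotone CNF formula. Each clause is non-empty because $F \neq [n]$ for every $F \in \mathcal{F}$, which is guaranteed by $f([n]) = 1$.

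Finally I would verify the semantics by setting $S := \{i \mid x_i = 1\}$ and unfolding the quantifier structure: the formula $\phi$ evaluates to $1$ exactly when every $F \in \mathcal{F}$ fails to contain $S$, which by the duality statement above is equivalent to $f(S) = 1$.

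I do not foresee any real obstacle; the argument is a standard duality (the $\mathcal{F}$-based monotone CNF is simply the conjugate of the minimal-true-set monotone DNF representation). The only subtleties concern the hypothesis that $f$ is non-constant: the condition $f(\emptyset) = 0$ is what guarantees $\mathcal{F} \neq \emptyset$, so that $\phi$ is not a vacuous (and therefore constantly-$1$) empty conjunction, while the condition $f([n]) = 1$ is what guarantees that every clause contains at least one literal, so that no clause is constantly~$0$.
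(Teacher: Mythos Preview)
Your proposal is correct and follows essentially the same route as the paper: both define the CNF whose clauses are indexed by the inclusion-maximal false sets of~$f$, use monotonicity to show that $f(S)=0$ iff $S$ is contained in such a maximal set, and use the non-constancy hypothesis to ensure the clause family is nonempty and every clause has at least one literal. The only cosmetic difference is that you first deduce $f(\emptyset)=0$ and $f([n])=1$ explicitly, whereas the paper invokes ``not constantly~$1$'' and ``not constantly~$0$'' directly.
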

\begin{proof}
Consider a monotone set-function~$f$. Let~$\mathcal{S} \subseteq 2^{[n]}$ denote the inclusion-wise maximal subsets~$S$ of~$[n]$ for which~$f(S) = 0$. Since~$f$ is not constantly~$1$, the set~$\mathcal{S}$ is not empty. Since~$f$ is monotone, it follows that~$f(S) = 0$ if and only if~$S \subseteq T$ for some~$T \in \mathcal{S}$. Create a CNF~$\phi := \bigwedge _{T \in \mathcal{S}} \bigvee _{i \in [n] \setminus T} x_i$. Since~$f$ is not constantly~$0$, we have~$[n] \not \in \mathcal{S}$ and each clause in~$\phi$ has at least one literal. For every~$S \subseteq [n]$ with~$f(S) = 1$ we know that~$S \not \subseteq T$ for all~$T \in \mathcal{S}$ and therefore setting~$x_i$ to~$1$ for all~$i \in S$ and the remaining variables to~$0$ satisfies~$\phi$. Conversely, suppose that setting all variables~$x_i$ with~$i \in S$ to~$1$ and the remainder to~$0$ satisfies the CNF. Then for every~$T \in \mathcal{S}$ we have~$S \not \subseteq T$ and therefore~$f(S) = 1$. Since~$\phi$ only has positive literals, this concludes the proof.
\end{proof}

\subparagraph{Graphs}
We will denote the treewidth of a graph~$G$ by $\tw(G)$ and its pathwidth by~$\pw(G)$. It is well-known that~$\pw(G) \geq \tw(G)$; refer to a textbook for further details~\cite[\S 7]{CyganFKLMPPS15}. We use~$\opt_{\IS}(G)$ to denote the size of a maximum independent set in a graph~$G$. By~$\opt_\VC(G)$ we denote the size of a minimum vertex subset that intersects all edges (a vertex cover). Finally,~$\opt_\DS(G)$ denotes the minimum size of a vertex subset such that any vertex not in the subset, has a neighbor in the subset (a dominating set). We use the following consequence of the gluing operation.

\begin{proposition} \label{prop:gluesets}
Let~$G$ and~$H$ be $t$-boundaried graphs that share the same set of boundary vertices~$B = \{v_1, \ldots, v_t\}$ but are otherwise vertex-disjoint. Then a vertex set~$X \subseteq V(G \oplus H)$ is independent in~$G \oplus H$ if and only if~$X \cap V(G)$ is independent in~$G$ and~$X \cap V(H)$ is independent in~$H$.
\end{proposition}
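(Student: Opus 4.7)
The plan is to prove both implications by carefully using the structure of the gluing operation. The key observation I would use is that, since $G$ and $H$ are vertex-disjoint apart from the shared boundary $B$, the gluing $G \oplus H$ has vertex set $V(G) \cup V(H)$ and edge set $E(G) \cup E(H)$: every edge of $G \oplus H$ is inherited from either $G$ or $H$, because the gluing only identifies corresponding boundary vertices (and deduplicates any parallel edges this would create). This structural fact is what makes both directions almost automatic.

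For the forward direction, I would argue contrapositively. Suppose $X \cap V(G)$ is not independent in $G$; then some edge $uv \in E(G)$ has both endpoints in $X$. Since $E(G) \subseteq E(G \oplus H)$ and $u, v \in V(G) \subseteq V(G \oplus H)$, this same edge witnesses that $X$ is not independent in $G \oplus H$. The symmetric argument handles the case where $X \cap V(H)$ is not independent in $H$.

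For the backward direction, assume $X \cap V(G)$ is independent in $G$ and $X \cap V(H)$ is independent in $H$, and let $uv \in E(G \oplus H)$. By the observation above, either $uv \in E(G)$ or $uv \in E(H)$. In the first case $u, v \in V(G)$, so if both were in $X$ they would lie in $X \cap V(G)$ and contradict the independence of $X \cap V(G)$ in $G$; hence at most one of $u, v$ lies in $X$. The other case is symmetric. Since this holds for every edge of $G \oplus H$, the set $X$ is independent.

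I do not anticipate a real obstacle: the only subtle point is making explicit why every edge of $G \oplus H$ belongs to $E(G) \cup E(H)$, which follows directly from the definition of the gluing operation (identification of boundary vertices plus removal of parallel edges introduces no new edges). Once that is stated, both directions reduce to a one-line edge-checking argument.
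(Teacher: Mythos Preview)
Your proposal is correct and follows essentially the same approach as the paper: both directions hinge on the observation that every edge of~$G \oplus H$ lies in~$E(G) \cup E(H)$, after which the forward direction is the trivial subgraph argument (which you phrase contrapositively) and the backward direction is the same edge-by-edge case split you describe.
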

\begin{proof}
\textbf{$(\Rightarrow$)} Since~$G$ and~$H$ are both subgraphs of~$G \oplus H$, the vertices of~$X$ that belong to a given subgraph form an independent set in that subgraph.

\textbf{$(\Leftarrow$)} The definition of gluing implies that every edge of~$G \oplus H$ is an edge of~$G$, an edge of~$H$, or both. If~$X$ is not independent in~$G \oplus H$ because it induces an edge of~$G$, then~$X \cap V(G)$ is not independent in~$G$. If~$X$ induces an edge of~$H$, then~$X \cap V(H)$ is not independent in~$H$. As any edge in~$G \oplus H$ belongs to one of these two categories, this concludes the proof.
\end{proof}

\section{Characterizing equivalence classes for Independent Set} \label{sec:characterizingclasses}
In this section we derive several tools to analyze the equivalence classes of~$\equiv_{\IS}$. For each $t$-boundaried graph~$G$ we define a function that captures the interaction of optimal independent sets with its boundary. These will be useful to reason about the (non)equivalence of pairs of graphs with respect to~$\equiv_{\IS}$.

\begin{definition} \label{def:subsetfunc}
Let~$G$ be a $t$-boundaried graph with boundary~$B = \{v_1, \ldots, v_t\}$. 
The function~$\s_G \colon 2^B \to \mathbb{N}$ expresses the size of a maximum independent set in~$G$ whose intersection with the boundary is a \emph{subset} of a given set:
\begin{equation}
\s_G(S) := \max \bigl \{ |X| \bigmid X \text{ is an independent set in~$G$ with } X \cap B \subseteq S \bigr \}.
\end{equation}
\end{definition}

We will see that equivalence classes can be characterized by the functions~$\s_G$ of the graphs~$G$ in that class. The next lemma shows that when gluing two boundaried graphs~$G$ and~$H$ together, the optimum of the resulting graph~$G \oplus H$ can be deduced from~$\s_G$ and~$\s_H$. The identity we prove is reminiscent of the recurrence that is used for \textsc{join} nodes when solving \IndependentSet on graphs of bounded treewidth~\cite[\S 7.3.1]{CyganFKLMPPS15}.

\begin{lemma} \label{lemma:opt:from:functions}
Let~$G$ and~$H$ be $t$-boundaried graphs for some~$t$. The following holds:
$$\max _{S\subseteq B}\{\s_G(S) + \s_H(S) - |S|\} = \optis(G \oplus H).$$
\end{lemma}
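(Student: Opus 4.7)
The plan is to prove the identity by establishing both inequalities via direct constructions, with Proposition \ref{prop:gluesets} doing the heavy lifting in translating between independent sets in $G$, $H$, and $G \oplus H$.

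For the inequality $\optis(G \oplus H) \geq \max_{S \subseteq B}\{\s_G(S) + \s_H(S) - |S|\}$, I will start from the optimum on the right: fix $S$ achieving the maximum and let $X_G$ witness $\s_G(S)$ with $X_G \cap B = S_G \subseteq S$, and $X_H$ witness $\s_H(S)$ with $X_H \cap B = S_H \subseteq S$. The naive candidate $X_G \cup X_H$ need not be independent in $G \oplus H$, since a boundary vertex in $S_H \setminus S_G$ may be adjacent in $G$ to a non-boundary vertex of $X_G$; avoiding this mismatch on the boundary is the main obstacle. I will sidestep it by taking $Y := (X_G \setminus B) \cup (X_H \setminus B) \cup (S_G \cap S_H)$. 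By construction $Y \cap V(G) \subseteq X_G$ and $Y \cap V(H) \subseteq X_H$, so Proposition \ref{prop:gluesets} confirms that $Y$ is independent in $G \oplus H$. Counting the three pairwise disjoint pieces of $Y$ and using $S_G \cup S_H \subseteq S$ gives $|Y| = |X_G| + |X_H| - |S_G \cup S_H| \geq |X_G| + |X_H| - |S| = \s_G(S) + \s_H(S) - |S|$, which is the required lower bound on $\optis(G \oplus H)$.

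The reverse inequality runs in the opposite direction: take an optimum independent set $X^*$ in $G \oplus H$ and set $S^* := X^* \cap B$. Proposition \ref{prop:gluesets} yields that $X^* \cap V(G)$ and $X^* \cap V(H)$ are independent in $G$ and $H$ respectively, each with boundary intersection exactly $S^*$, hence their sizes are bounded by $\s_G(S^*)$ and $\s_H(S^*)$. Since $V(G) \cap V(H) = B$ inside the glued graph and $X^* \cap B = S^*$, inclusion--exclusion gives $|X^*| = |X^* \cap V(G)| + |X^* \cap V(H)| - |S^*|$. Choosing $S = S^*$ on the right-hand side of the identity then completes the bound $\optis(G \oplus H) \leq \max_{S \subseteq B}\{\s_G(S) + \s_H(S) - |S|\}$, and combining the two inequalities finishes the proof.
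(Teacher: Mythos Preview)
Your proof is correct and follows essentially the same approach as the paper. In particular, your set $Y = (X_G \setminus B) \cup (X_H \setminus B) \cup (S_G \cap S_H)$ is exactly the set the paper obtains via its multiset-then-remove-one-copy-of-each-$v\in S^*$ description, and your reverse direction is virtually identical to the paper's.
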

\begin{proof}
Assume without loss of generality that~$G$ and~$H$ have the same set of vertices~$B = \{v_1, \ldots, v_t\}$ as their boundary and are otherwise vertex-disjoint:~$V(G) \cap V(H) = B$. We prove equality by showing that the inequality holds in both directions.

\textbf{$(\leq)$} Consider a set~$S^* \subseteq B$ maximizing the expression on the left. Let~$X_G \subseteq V(G)$ and~$X_H \subseteq V(H)$ be independent sets in~$G$ and~$H$ of sizes~$\s_G(S^*)$ and~$\s_H(S^*)$ with~$X_G \cap B \subseteq S^*$ and~$X_H \cap B \subseteq S^*$; these exist by Definition~\ref{def:subsetfunc}. Consider the multiset~$X'$ obtained by taking the disjoint union of~$X_G$ and~$X_H$, which contains elements of~$S^*$ twice if they are used in both~$X_H$ and~$X_G$. Note that~$X'$ contains exactly~$\s_G(S^*) + \s_H(S^*)$ occurrences of vertices (possibly duplicating elements of~$S^*$). Let~$X$ be the result of removing from~$X'$ one copy of each vertex in~$S^*$ (if such a copy is present at all). Then~$X$ is a simple set (no vertex occurs more than once since we removed one copy of each vertex that could occur twice) and~$|X| \geq \s_G(S^*) + \s_H(S^*) - |S^*|$. To show that~$\optis(G \oplus H)$ is at least the value of the left-hand side, it therefore suffices to show that~$\optis(G \oplus H) \geq |X|$. To conclude the proof we therefore argue that~$X$ is an independent set in~$G \oplus H$. Since vertices of~$S^*$ that occur in only one of the sets~$X_G,X_H$ do not occur in~$X$ (their single occurrence is removed from~$X'$ in the process of constructing~$X$), we have~$X \cap V(G) \subseteq X_G$ and~$X \cap V(H) \subseteq X_H$. Since~$X_G$ and~$X_H$ are independent in~$G$ and~$H$, Proposition~\ref{prop:gluesets} implies that~$X$ is independent in~$G \oplus H$, which concludes this direction of the proof.

\textbf{$(\geq)$} Let~$X$ be a maximum independent set in~$G \oplus H$ of size~$\optis(G \oplus H)$, and let~$S^* := X \cap B$ be the vertices used in the boundary. Then~$\s_G(S^*) \geq |X \cap V(G)|$ and~$\s_H(S^*) \geq |X \cap V(H)|$, by definition. Observe that~$|X| = |X \cap V(G)| + |X \cap V(H)| - |S^*|$, since~$X \cap V(G) \cap V(H)  = S^*$ and so those vertices are counted twice. So~$\s_G(S^*) + \s_H(S^*) - |S^*| \geq |X \cap V(G)| + |X \cap V(H)| - |S^*| = |X| = \optis(G)$. As the maximum of the left-hand side over all sets~$S \subseteq B$ is at least as large as the value for~$S^*$, this concludes the proof.
\end{proof}

To relate the equivalence of graphs to properties of the corresponding functions~$\s$, the following indicator graphs will be convenient.

\begin{definition} \label{def:indicator}
Let~$t$ be a positive integer and~$B = \{v_1, \ldots, v_t\}$. For each $S\subseteq B$ define the $t$-boundaried \emph{indicator} graph~$I_S$ with boundary~$B$ as the result of the following process: starting from an edgeless graph with vertex set $B$, for each $v_i \in B \setminus S$ add vertices $u_i, u'_i$ and the edges $\{v_i, u_i\}, \{v_i, u'_i\}$ to $I_S$.
\end{definition}

Each boundary vertex not in~$S$ thus becomes the center of a star with two leaves in~$I_S$, and boundary vertices in~$S$ are isolated vertices in~$I_S$. The next proposition shows that maximum independent sets of~$F \oplus I_S$ reveal the value of~$\s_F(S)$.

\begin{proposition} \label{prop:glueindicator}
$\optis(F \oplus I_S) = \s_F(S) + 2(t - |S|)$ for all $t$-boundaried graphs~$F$.
\end{proposition}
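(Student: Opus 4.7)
The plan is to prove the identity by establishing matching upper and lower bounds on $\optis(F \oplus I_S)$ directly, exploiting the specific structure of the indicator gadget. The key structural observation is that each boundary vertex $v_i \in B \setminus S$ sits at the center of a two-leaf star in $I_S$: it is adjacent to $u_i$ and $u'_i$, and $u_i, u'_i$ have no other neighbors in $I_S$. Consequently, for any independent set one can trade $v_i$ (contributing $1$) for both pendants $u_i, u'_i$ (contributing $2$) without losing independence, whereas vertices in $S$ are isolated in $I_S$ and therefore impose no constraint on $F$.

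For the lower bound $(\geq)$, I would take an independent set $Y$ in $F$ of size $\s_F(S)$ with $Y \cap B \subseteq S$, which exists by Definition~\ref{def:subsetfunc}. Setting $X := Y \cup \{u_i, u'_i \mid v_i \in B \setminus S\}$, I argue that $X$ is independent in $F \oplus I_S$: the pendants' only neighbors are the corresponding $v_i \in B \setminus S$, and these are excluded from $Y$ by the constraint $Y \cap B \subseteq S$. Hence $|X| = \s_F(S) + 2(t - |S|)$, witnessing the lower bound.

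For the upper bound $(\leq)$, I would start from any maximum independent set $X$ in $F \oplus I_S$ and perform a local improvement: for each $v_i \in B \setminus S$ with $v_i \in X$, replace $v_i$ by $\{u_i, u'_i\}$. This swap preserves independence (the pendants had no neighbors in $X$ besides $v_i$, which is removed) and does not decrease $|X|$. After the swap, $X \cap V(F)$ is an independent set in $F$ — by Proposition~\ref{prop:gluesets}, or simply because $F$ is a subgraph of $F \oplus I_S$ — whose intersection with $B$ lies in $S$, so $|X \cap V(F)| \leq \s_F(S)$. The remaining vertices of $X$ are pendants $u_i, u'_i$ with $v_i \in B \setminus S$, contributing at most $2(t - |S|)$ in total. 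Adding the two bounds gives the desired inequality.

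There is no real obstacle: both directions are short and use only the local structure of $I_S$. The only point requiring a line of care is verifying that the pendant swap in the upper bound is valid and weakly improving — i.e.\ that $u_i$ and $u'_i$ are nonadjacent to any vertex of $X$ other than $v_i$ itself — which follows immediately from Definition~\ref{def:indicator}. Alternatively, one could derive the identity from Lemma~\ref{lemma:opt:from:functions} by computing $\s_{I_S}(S') = |S' \cap S| + 2(t - |S|)$ and then checking that $\max_{S' \subseteq B}\{\s_F(S') - |S' \setminus S|\} = \s_F(S)$ via monotonicity of $\s_F$, but the direct approach above is cleaner.
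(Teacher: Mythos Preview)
Your proposal is correct and essentially mirrors the paper's own proof: the lower bound takes an optimal $Y$ with $Y\cap B\subseteq S$ and adjoins all pendants, while the upper bound uses the same pendant-swap idea (the paper phrases it as ``$U\subseteq X$ for any maximum independent set'' rather than as an explicit modification, but the content is identical). The alternative route via Lemma~\ref{lemma:opt:from:functions} that you mention is not used in the paper either.
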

\begin{proof}
Let~$U := \{u_i, u'_i \mid v_i \in B \setminus S\}$ denote the $u$-vertices of the graph~$I_S$; note that~$|U| = 2(t - |S|)$. We prove the equality by establishing two inequalities.

\textbf{($\geq$)} Let~$X_F$ be an independent set in~$F$ of size~$\s_F(S)$ with~$X_F \cap B \subseteq S$, which exists by definition of~$\s_F$. Then~$U \cup X_F$ is an independent set in~$F \oplus I_S$ of size~$\s_F(S) + |U| = \s_F(S) + 2(t - |S|)$, so the optimal independent set is at least as large.

\textbf{($\leq$)} Let~$X$ be a maximum independent set in~$F \oplus I_S$. We claim that~$U \subseteq X$. To see that, observe that the vertices~$u_i$ and~$u'_i$ have degree one in~$F \oplus I_S$, so if they are not in a maximal independent set then their common neighbor~$v_i$ is; but then one can obtain a larger independent set by replacing~$v_i$ by~$u_i$ and~$u'_i$. It follows that~$U \subseteq X$, and by the edges between~$u_i$ and~$v_i$ it follows that~$v_i \not \in X$ for all~$v_i \in B \setminus S$. In other words:~$X$ contains no vertex of~$B \setminus S$. Since~$V(F \oplus I_S) = U \cup V(F)$, this implies that~$X_F := X \setminus U$ is an independent set in~$F$ of size~$|X_F| = |X| - |U| = \optis(F \oplus I_S) - 2(t-|S|)$. It contains no vertex of~$B \setminus S$, implying that its intersection with the boundary is a subset of~$S$. It follows that~$\s_F(S) \geq |X| - |U| = \optis(F \oplus I_S) - 2(t - |S|)$. Adding~$2(t-|S|)$ on both sides and reversing the direction yields the desired inequality.
\end{proof}

Using Proposition~\ref{prop:glueindicator} we can show that the equivalence class of a boundaried graph~$G$ with respect to~$\equiv_{\IS}$ is completely characterized by the function~$\s_G$.

\begin{theorem} \label{thm:fiffequivalent}
Let $G$ and~$H$ be two $t$-boundaried graphs with boundary $B = \{v_1, \ldots, v_t\}$. Then~$G \equiv_{\IS,t} H$ if and only if there exists a constant $c \in \mathbb{Z}$ such that $\s_{G}(S) = \s_{H}(S) + c$ for all $S\subseteq B$.
\end{theorem}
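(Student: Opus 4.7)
The plan is a clean two-direction proof that exploits the two technical results already in hand: Lemma~\ref{lemma:opt:from:functions} expresses $\optis(G \oplus F)$ in terms of $\s_G$ and $\s_F$, and Proposition~\ref{prop:glueindicator} shows that gluing with the indicator graph $I_S$ lets us read off $\s_G(S)$ from $\optis(G \oplus I_S)$. These are precisely matched to the two directions.

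For the ($\Leftarrow$) direction, suppose $\s_G(S) = \s_H(S) + c$ for every $S \subseteq B$. I would set $\Delta := c$ and verify the equivalence condition against an arbitrary $t$-boundaried $F$. Apply Lemma~\ref{lemma:opt:from:functions} to both $G \oplus F$ and $H \oplus F$, substitute $\s_G(S) = \s_H(S) + c$ in the former, and pull the additive constant $c$ out of the maximum. This immediately gives $\optis(G \oplus F) = \optis(H \oplus F) + c$, matching Equation~\eqref{eq:equivalence}.

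For the ($\Rightarrow$) direction, suppose $G \equiv_{\IS,t} H$ with transposition constant $\Delta$. For each $S \subseteq B$, plug $F := I_S$ into the equivalence relation to obtain $\optis(G \oplus I_S) = \optis(H \oplus I_S) + \Delta$. Proposition~\ref{prop:glueindicator} rewrites both sides, yielding $\s_G(S) + 2(t-|S|) = \s_H(S) + 2(t-|S|) + \Delta$; the $2(t-|S|)$ terms cancel, leaving $\s_G(S) = \s_H(S) + \Delta$. Since this argument holds uniformly for every $S \subseteq B$ with the \emph{same} $\Delta$, setting $c := \Delta$ witnesses the claim.

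There is no real obstacle here: both directions are short once one notices that Lemma~\ref{lemma:opt:from:functions} and Proposition~\ref{prop:glueindicator} are set up as exact inverses of each other, the former computing $\optis$ from $\s$-values and the latter recovering $\s$-values from $\optis$ of suitable gluings. The only mild subtlety is observing that the same additive constant $c$ works for all $S$ in the reverse direction, which is automatic because the universal $\Delta$ from the equivalence is a single integer independent of the test graph $F = I_S$.
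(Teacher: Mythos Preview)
Your proposal is correct and follows essentially the same approach as the paper: the ($\Leftarrow$) direction applies Lemma~\ref{lemma:opt:from:functions} and pulls the constant out of the maximum, while the ($\Rightarrow$) direction instantiates $F := I_S$ and uses Proposition~\ref{prop:glueindicator} to cancel the $2(t-|S|)$ term. The paper's proof is structured identically.
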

\begin{proof}
We prove the two directions of the equivalence separately.

\textbf{($\Leftarrow$)} Assume that~$\s_G(S) = \s_H(S) + c$ for all~$S \subseteq B$. To prove that~$G \equiv_\IS H$ as per Equation~\ref{eq:equivalence}, it suffices to show that for all $t$-boundaried graphs~$F$ we have~$\optis(G \oplus F) = \optis(H \oplus F) + c$. Now observe:
\begin{align*}
\optis(G \oplus F) &= \max_{S\subseteq B}\{\s_G(S) + \s_F(S) - |S|\} & \mbox{By Lemma~\ref{lemma:opt:from:functions}.} \\
&= \max_{S\subseteq B}\{\s_H(S) + c + \s_F(S) - |S|\} & \mbox{By assumption.} \\
&= \max_{S\subseteq B}\{\s_H(S) + \s_F(S) - |S|\} + c\\
&= \optis(H \oplus F) + c & \mbox{By Lemma~\ref{lemma:opt:from:functions}.}
\end{align*}

\textbf{($\Rightarrow$)} Suppose that~$G \equiv_\IS H$ and let~$c$ be a constant such that~$\optis(G \oplus F) = \optis(H \oplus F) + c$ for all~$F$. Now consider an arbitrary~$S \subseteq B$:
\begin{align*}
\s_{G}(S) &= \optis(G\oplus I_S) - 2(t - |S|) & \mbox{By Proposition~\ref{prop:glueindicator}.} \\
&= \optis(H\oplus I_S) + c - 2(t - |S|) & \mbox{Since~$G \equiv_\IS H$.} \\
&= \optis(H\oplus I_S) - 2(t - |S|) + c \\
&= \s_{H}(S) + c & \mbox{By Proposition~\ref{prop:glueindicator}.}
\end{align*}
This concludes the proof of Theorem~\ref{thm:fiffequivalent}.
\end{proof}

Theorem~\ref{thm:fiffequivalent} shows that two $t$-boundaried graphs~$G$ and~$H$ are equivalent under~$\equiv_\IS$ if the functions~$\s_G$ and~$\s_H$ differ by a fixed constant for all inputs. It will be convenient to eliminate this degree of freedom by normalizing the functions.

\begin{definition} \label{def:normalized}
The \emph{normalized boundary function} of a $t$-boundaried graph~$G$ with boundary~$B$ is the function~$\s^0_G \colon 2^B \to \mathbb{N}$ given by~$\s^0_G(S) := \s_G(S) - \s_G(\emptyset)$.
\end{definition}

Intuitively,~$\s^0_G(S)$ represents how much larger an independent set can be if we are allowed to use the boundary vertices from~$S$, compared to when we are not allowed to use any boundary vertices in the independent set.

\begin{corollary} \label{cor:characteristics}
Let~$G$ and~$H$ be two $t$-boundaried graphs with boundary~$B = \{v_1, \ldots, v_t\}$. Then~$G \equiv_\IS H$ if and only if~$\s^0_G = \s^0_H$.
\end{corollary}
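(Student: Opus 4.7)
The plan is to derive this as an immediate corollary of Theorem~\ref{thm:fiffequivalent} by using the fact that $\s^0_G$ is obtained from $\s_G$ by subtracting its value at $\emptyset$, which exactly absorbs the additive-constant freedom present in the theorem's characterization.

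For the ($\Leftarrow$) direction I would assume $\s^0_G = \s^0_H$ and rearrange the definition of the normalized function: for every $S \subseteq B$,
\[
\s_G(S) - \s_G(\emptyset) = \s_H(S) - \s_H(\emptyset),
\]
which rewrites as $\s_G(S) = \s_H(S) + c$ with $c := \s_G(\emptyset) - \s_H(\emptyset) \in \mathbb{Z}$. Theorem~\ref{thm:fiffequivalent} then gives $G \equiv_{\IS} H$.

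For the ($\Rightarrow$) direction I would apply Theorem~\ref{thm:fiffequivalent} to obtain some $c \in \mathbb{Z}$ with $\s_G(S) = \s_H(S) + c$ for every $S \subseteq B$. Specializing to $S = \emptyset$ identifies the constant as $c = \s_G(\emptyset) - \s_H(\emptyset)$. Substituting back into the definition of the normalized boundary function,
\[
\s^0_G(S) = \s_G(S) - \s_G(\emptyset) = \bigl(\s_H(S) + c\bigr) - \bigl(\s_H(\emptyset) + c\bigr) = \s^0_H(S)
\]
for every $S \subseteq B$, so $\s^0_G = \s^0_H$.

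There is no real obstacle here; the whole content is that normalizing at $\emptyset$ is the right way to pin down the otherwise-free additive constant, and both directions are one-line manipulations once Theorem~\ref{thm:fiffequivalent} is invoked.
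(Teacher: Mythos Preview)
Your proof is correct and is essentially identical to the paper's own argument: both directions reduce to Theorem~\ref{thm:fiffequivalent}, with the constant identified as $c = \s_G(\emptyset) - \s_H(\emptyset)$ and the same one-line cancellation showing $\s^0_G = \s^0_H$.
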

\begin{proof}
By Theorem~\ref{thm:fiffequivalent} it suffices to prove that~$\s^0_G = \s^0_H$ if and only if there is a constant~$c$ such that $\s_{G}(S) = \s_{H}(S) + c$ for all $S\subseteq B$. For the forward direction, it is easy to verify that choosing~$c := \s_G(\emptyset) - \s_H(\emptyset)$ suffices. For the reverse direction, suppose that~$\s_G(S) = \s_H(S) + c$ for all~$S \subseteq B$. Then for all~$S \subseteq B$ we have:
$$\s^0_G(S) = \s_G(S) - \s_G(\emptyset) = (\s_H(S) + c) - (\s_H(\emptyset) + c) = \s_H(S) - \s_H(\emptyset) = \s^0_H(S).\qedhere$$
\end{proof}

Corollary~\ref{cor:characteristics} shows that equivalence classes of~$\equiv_\IS$ are determined by the normalized boundary functions of the graphs in the class. To see how many different equivalence classes there can be, it is therefore useful to analyze the properties of normalized boundary functions.

\begin{definition} \label{def:representativefunctions}
Let~$t$ be a positive integer and let~$B := \{v_1, \ldots, v_t\}$. A function~$f \colon 2^B \to \mathbb{N}$ is called a \emph{$t$-representative function} if it satisfies the following three properties:
\begin{enumerate}
	\item $f(\emptyset) = 0$.\label{pty:startzero}
	\item \emph{Monotonicity}: for any~$S' \subseteq S \subseteq B$ we have~$f(S') \leq f(S)$.\label{pty:monotone}
	\item \emph{Bounded increase}: For every nonempty set~$S \subseteq B$ we have~$f(S) \leq 1 + \min _{v \in S} f(S \setminus \{v\})$.\label{pty:boundedincrease}
\end{enumerate}
\end{definition}

\begin{lemma} \label{lemma:boundary:char:is:representative}
Let~$G$ be a $t$-boundaried graph with boundary~$B := \{v_1, \ldots, v_t\}$. Then~$\s^0_G$ is a $t$-representative function.
\end{lemma}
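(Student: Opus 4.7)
The plan is to verify the three defining properties of a $t$-representative function directly from Definitions~\ref{def:subsetfunc} and~\ref{def:normalized}. All three arguments should be short and elementary since they reduce to manipulating independent sets witnessing the value of $\s_G$. After subtracting the constant $\s_G(\emptyset)$ as in Definition~\ref{def:normalized}, any additive inequality between values of $\s_G$ transfers verbatim to $\s^0_G$, so I will mostly argue about $\s_G$.

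First, Property~\ref{pty:startzero} is immediate: $\s^0_G(\emptyset) = \s_G(\emptyset) - \s_G(\emptyset) = 0$. Next, for Property~\ref{pty:monotone} (monotonicity), I observe that if $S' \subseteq S \subseteq B$, then every independent set $X$ in $G$ with $X \cap B \subseteq S'$ also satisfies $X \cap B \subseteq S$. Hence the set of candidates in the maximization defining $\s_G(S)$ contains every candidate for $\s_G(S')$, giving $\s_G(S') \leq \s_G(S)$, and therefore $\s^0_G(S') \leq \s^0_G(S)$.

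Finally, for Property~\ref{pty:boundedincrease} (bounded increase), fix a nonempty $S \subseteq B$ and an arbitrary $v \in S$. Let $X$ be an independent set in $G$ of size $\s_G(S)$ with $X \cap B \subseteq S$, which exists by Definition~\ref{def:subsetfunc}. Consider $X' := X \setminus \{v\}$. Then $X'$ is still independent in $G$, satisfies $X' \cap B \subseteq S \setminus \{v\}$, and $|X'| \geq |X| - 1 = \s_G(S) - 1$. By definition of $\s_G$ we conclude $\s_G(S \setminus \{v\}) \geq \s_G(S) - 1$, i.e.\ $\s_G(S) \leq 1 + \s_G(S \setminus \{v\})$. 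Since $v \in S$ was arbitrary, this inequality holds with the right-hand side taken over the minimum, and subtracting $\s_G(\emptyset)$ from both sides gives the required bound on $\s^0_G$.

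The whole argument is essentially routine; there is no real obstacle, only the bookkeeping of keeping ``subset of $S$'' versus ``equal to $S$'' straight in the definition of $\s_G$. The only subtle point is in Property~\ref{pty:boundedincrease}: one must use the ``subset'' flavor of Definition~\ref{def:subsetfunc} so that removing a single boundary vertex from a witness for $\s_G(S)$ immediately yields an admissible witness for $\s_G(S \setminus \{v\})$, without having to worry about whether $v$ was actually used.
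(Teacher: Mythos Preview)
Your proof is correct and essentially identical to the paper's own argument: all three properties are verified directly from Definitions~\ref{def:subsetfunc} and~\ref{def:normalized}, with Property~\ref{pty:boundedincrease} handled by removing a single boundary vertex from a witnessing independent set. If anything, your version is slightly more careful in writing $|X'| \geq |X|-1$ rather than $|X'| = |X|-1$, since $v$ need not belong to $X$.
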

\begin{proof}
We prove that~$\s^0_G$ has the three properties given in Definition~\ref{def:representativefunctions}.

\textbf{(\ref{pty:startzero})} By definition of~$\s^0_G$ we have~$\s^0_G(\emptyset) = \s_G(\emptyset) - \s_G(\emptyset) = 0$.

\textbf{(\ref{pty:monotone})} This follows directly from Definitions~\ref{def:subsetfunc} and~\ref{def:normalized}: the collection of independent sets over which~$\s_G(S')$ optimizes is a subset of the independent sets over which~$\s_G(S)$ optimizes.

\textbf{(\ref{pty:boundedincrease})} Consider a nonempty set~$S \subseteq B$ and let~$X$ be an independent set in~$G$ of size~$\s_G(S)$ with~$X \cap B \subseteq S$, which exists by Definition~\ref{def:subsetfunc}. For every~$v \in S$ we have that~$X \setminus \{v\}$ is an independent set of size~$|X| - 1$ in~$G$ whose intersection with~$B$ is a subset of~$S \setminus \{v\}$, implying that~$\s_G(S \setminus \{v\}) \geq |X| - 1 = \s_G(S) - 1$. Adding~$1 - \s_G(\emptyset)$ on both sides we obtain~$\s^0_G(S) = \s_G(S) - \s_G(\emptyset) \leq 1 + \s_G(S \setminus \{v\}) - \s_G(\emptyset) = 1 + \s^0_G(S \setminus \{v\})$. As this holds for all~$v \in S$, it holds in particular for~$v \in S$ minimizing~$\s^0_G(S \setminus \{v\})$.
\end{proof}

\section{Defining graphs with given boundary characteristics} \label{sec:defining:general}

Corollary~\ref{cor:characteristics} shows that $t$-boundaried graphs with the same normalized boundary function belong to the same equivalence class. Since each normalized boundary function is a $t$-representative function by Lemma~\ref{lemma:boundary:char:is:representative}, this implies that the number of equivalence classes of~$\equiv_{\IS,t}$ is \emph{at most} the number of distinct $t$-representative functions. In Lemma~\ref{lemma:generalgraphs} we will show that, surprisingly, the converse also holds: for each $t$-representative function there is a distinct equivalence class of~$\equiv_{\IS,t}$. Before proving that lemma, we first derive a useful property of $t$-representative functions.

\begin{proposition} \label{prop:swap:subsets}
Each $t$-representative function~$f$ satisfies~$f(S') - |S' \setminus S| \leq f(S)$ for all~$S, S' \subseteq B$.
\end{proposition}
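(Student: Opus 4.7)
The plan is to use the bounded increase property to ``peel off'' the elements of $S' \setminus S$ one at a time from $S'$, losing at most one unit of $f$-value at each step, and then use monotonicity to compare with $f(S)$.

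First, I would observe that the bounded increase property (Definition~\ref{def:representativefunctions}.\ref{pty:boundedincrease}) in fact yields $f(T) \leq 1 + f(T \setminus \{v\})$ for \emph{every} $v \in T$, not just the minimizer, since the minimum is a lower bound on each individual value. Equivalently, removing any single element from a nonempty set decreases $f$ by at most one.

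Next, let $k := |S' \setminus S|$ and enumerate its elements as $v_1, \ldots, v_k$. Define $T_0 := S'$ and $T_i := S' \setminus \{v_1, \ldots, v_i\}$ for $i = 1, \ldots, k$, so that $T_k = S' \cap S$. By iterating the inequality from the previous paragraph along this chain (a straightforward induction on $i$), I would obtain
$$f(S') = f(T_0) \leq k + f(T_k) = |S' \setminus S| + f(S' \cap S).$$

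Finally, since $S' \cap S \subseteq S$, the monotonicity property (Definition~\ref{def:representativefunctions}.\ref{pty:monotone}) gives $f(S' \cap S) \leq f(S)$. Combining this with the bound above and rearranging yields $f(S') - |S' \setminus S| \leq f(S)$, as required. There is no real obstacle here: the argument is essentially just telescoping the bounded increase property over the symmetric-difference chain, and the only subtlety worth flagging is the observation in the first paragraph that bounded increase holds pointwise in $v$, not only for the minimizer.
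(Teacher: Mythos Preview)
Your proof is correct and follows essentially the same approach as the paper: peel off the elements of~$S' \setminus S$ one at a time using the bounded-increase property to reach~$f(S' \cap S)$, then apply monotonicity to get~$f(S' \cap S) \leq f(S)$. The paper's proof is terser but the idea is identical, including your observation that bounded increase applies to every~$v \in T$ rather than only the minimizer.
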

\begin{proof}
By Property~\ref{pty:boundedincrease}, every time we remove an element of~$S'$ the function value drops by at most one. If we remove the elements~$S' \setminus S$ one at a time from~$S'$ until arriving at the set~$S' \cap S$, we therefore decrease the value by at most~$|S' \setminus S|$. This implies that~$f(S \cap S') \geq f(S') - |S' \setminus S|$. Hence~$f(S') - |S' \setminus S| \leq f(S \cap S') \leq f(S)$, where the last step uses Property~\ref{pty:monotone}.
\end{proof}

\begin{lemma}\label{lemma:generalgraphs}
For every $t$-representative function~$f$, there exists a $t$-boundaried graph $G$ with boundary $B := \{v_1,v_2,\ldots,v_t\}$, such that $\s^0_G(S) = f(S)$ for every $S\subseteq B$.
\end{lemma}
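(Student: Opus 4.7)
The plan is to construct $G$ explicitly from $f$ and then verify $\s^0_G = f$ via a careful analysis of maximum independent sets. For each nonempty $S \subseteq B$ define $\delta(S) := f(S) - \max_{v \in S} f(S \setminus \{v\})$; monotonicity and bounded increase force $\delta(S) \in \{0,1\}$, and I call $S$ a \emph{jump set} when $\delta(S) = 1$. The graph is built in four layers: (i) start with the edgeless boundary $B$; (ii) for every $i \in [t]$ add a new vertex $p_i$ adjacent to $v_i$; (iii) for every jump set $S$ add a new vertex $u_S$ and edges from $u_S$ to each $p_i$ with $v_i \in S$; (iv) for every pair of incomparable jump sets $S, S'$ insert the edge $\{u_S, u_{S'}\}$. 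Layer (iv) forces any independent set among the $u$-vertices to form a chain in the subset order.

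To compute $\s_G(T)$, I would observe that any maximum independent set $X$ decomposes into its boundary part $T' := X \cap B \subseteq T$, a chain $\mathcal{C} = \{S_1 \subsetneq \cdots \subsetneq S_k\}$ of jump sets whose $u$-vertices lie in $X$, and the vertices $p_i$ with $i \notin T' \cup S_k$, yielding $|X| = t + k - |S_k \setminus T'|$ (and $|X| = t$ when $\mathcal{C}$ is empty). A short induction along the chain, using the jump-set definition together with monotonicity, shows $f(S_{i+1}) \geq f(S_i) + 1$ for consecutive jump sets, and hence $f(S_k) \geq k$. Then Proposition~\ref{prop:swap:subsets} applied with $S_k$ and $T'$ yields $k - |S_k \setminus T'| \leq f(S_k) - |S_k \setminus T'| \leq f(T') \leq f(T)$, so $\s_G(T) \leq t + f(T)$. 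For the matching lower bound I construct a chain of jump sets of length exactly $f(T)$ lying entirely in $T$ by induction on $|T|$: if $T$ itself is a jump set, append $T$ to a chain of length $f(T)-1$ produced inductively inside $T \setminus v^*$ for a witness $v^*$ that attains $\max_v f(T \setminus v)$; otherwise, some $v^* \in T$ already satisfies $f(T \setminus v^*) = f(T)$, so the inductive chain in $T \setminus v^*$ suffices. Since $\s_G(\emptyset) = t$---the empty chain with all $p_i$ attains this, and any chain has $k \leq |S_k|$, so $t + k - |S_k| \leq t$---this gives $\s^0_G(T) = f(T)$.

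The step I expect to demand the most care is controlling chains whose top set $S_k$ extends outside $T'$: the length $k$ can exceed $f(T')$, but the penalty $|S_k \setminus T'|$ absorbs the excess exactly by Proposition~\ref{prop:swap:subsets}. A subsidiary bookkeeping point is that the $p_i$ forced out of $X$ are precisely those with $i \in T' \cup S_k$: this follows because $p_i$'s neighbors are $v_i$ together with the $u_S$'s satisfying $v_i \in S$, and since $\mathcal{C}$ is a chain with top $S_k$, some vertex of $\mathcal{C}$ is adjacent to $p_i$ iff $v_i \in S_k$.
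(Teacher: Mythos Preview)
Your argument is correct, and it follows a genuinely different route from the paper's proof. The paper builds~$G$ by adding, for every~$S \subseteq B$ with~$f(S) > 0$, a block~$V_S$ of~$f(S)$ false twins, making each~$V_S$ completely adjacent to every other~$V_{S'}$ and wiring~$V_S$ to~$u_i$ when~$v_i \in S$ and to~$v_i$ when~$v_i \notin S$. The analysis is then a one-shot case split: an independent set can meet at most one block~$V_{S'}$, after which a direct count together with Proposition~\ref{prop:swap:subsets} gives~$|X| \leq t + f(S)$. Your construction instead encodes~$f$ through its \emph{jump sets} and replaces the ``at most one block'' picture by the chain structure you extract in layer~(iv). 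The crucial inequality~$f(S_k) \geq k$ that drives your upper bound follows because for a jump set~$S'$ and any proper subset~$S \subsetneq S'$ one has~$f(S) \leq f(S' \setminus \{v\}) \leq f(S') - 1$ for~$v \in S' \setminus S$; this is exactly the step you describe. Your inductive construction of a length-$f(T)$ chain of jump sets inside~$T$ is also sound.

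What each approach buys: the paper's graph is quicker to analyse (no chain induction), but its size is~$t + \sum_{S} f(S)$, which can be of order~$t \cdot 2^t$. Your graph has only~$t$ plus the number of jump sets, hence at most~$t + 2^t - 1$ vertices, and often far fewer. So your construction is strictly more economical while still relying on the same key tool (Proposition~\ref{prop:swap:subsets}) for the upper bound.
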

\begin{proof}
Consider an arbitrary $t$-representative function~$f$, which assigns a non-negative integer to each~$S \subseteq B$. We construct a $t$-boundaried graph~$G$ for which~$\s^0_G = f$, as follows:

\begin{enumerate}
	\item Start from an edgeless graph with vertex set~$B$, which is the boundary of the graph.
	\item For each~$i \in [t]$ add a vertex~$u_i$ and the edge~$\{u_i, v_i\}$.
	\item For each~$S \subseteq B$ with~$f(S) > 0$, add a set~$V_S = \{v_{S,1}, \ldots, v_{S,f(S)}\}$ consisting of~$f(S)$ vertices to the graph. These vertices are false twins (all share the same open neighborhood) and are connected to the rest of the graph as follows:
	\begin{enumerate}
		\item For each~$i \in [t]$ with~$v_i \in S$, all vertices of~$V_S$ are adjacent to~$u_i$.
		\item For each~$i \in [t]$ with~$v_i \not \in S$, all vertices of~$V_S$ are adjacent to~$v_i$.
		\item All vertices of~$V_S$ are adjacent to all vertices~$V_{S'}$ that are created for sets~$S' \neq S$.
	\end{enumerate}
\end{enumerate}		

We show that~$\s_G(S) = t + f(S)$ for all~$S \subseteq B$. This will imply that~$\s^0_G(S) = \s_G(S) - \s_G(\emptyset) = (t + f(S)) - (t + f(\emptyset)) = (t + f(S)) - (t + 0) = f(S)$ for all~$S \subseteq B$, since~$f(\emptyset) = 0$ by Definition~\ref{def:representativefunctions}. We therefore conclude the proof by showing that~$\s_G(S) = t + f(S)$ for all~$S \subseteq B$, by establishing two inequalities. Consider an arbitrary~$S \subseteq B$.

\textbf{($\geq$)} To show~$\s_G(S) \geq t + f(S)$ we construct an independent set~$X$ in~$G$ of size~$t + f(S)$ that intersects~$B$ in a subset of~$S$. If~$f(S) = 0$ then~$X = \{u_1, \ldots, u_t\}$ suffices, so assume in the remainder that~$f(S) > 0$. Let~$X$ consist of the~$f(S)$ vertices in~$V_S$, together with the vertices~$\{u_i \mid i \in [t], v_i \not \in S\}$ and~$\{v_i \mid i \in [t], v_i \in S\}$. Then~$|X| = t + f(S)$, and using the construction above it is straight-forward to verify that~$X$ is an independent set. Since~$X \cap B = S$, this shows that~$\s_G(S) \geq t + f(S)$.

\textbf{($\leq$)} Now we argue that~$\s_G(S) \leq t + f(S)$. Consider a maximum independent set~$X$ in~$G$ that intersects~$B$ in a subset of~$S$, which has size~$\s_G(S)$ by Definition~\ref{def:subsetfunc}. If~$X$ contains no vertices of~$V_{S'}$ for any~$S' \subseteq B$, then~$X$ has at most~$t$ vertices: an independent set contains at most one vertex of each edge~$\{v_i, u_i\}$ for each~$i \in [t]$. Hence~$|X| \leq t$ in this case, which is at most~$t + f(S)$ since~$f(S) \geq 0$ by Properties~\ref{pty:startzero} and~\ref{pty:monotone}. In the remainder, assume~$X$ contains a vertex of~$V_{S'}$ for some~$S' \subseteq B$. This implies that~$X$ contains no vertices from~$V_{S''}$ for any~$S'' \neq S'$, since all vertices of~$V_{S'}$ are adjacent to all vertices of~$V_{S''}$ by construction of~$G$. Hence besides the vertices from~$V_{S'}$, the set~$X$ only contains vertices of edges~$\{v_i, u_i\}$ for~$i \in [t]$. The independent set~$X$ contains at most one vertex from each such edge. For each~$v_i \in S' \setminus S$, observe that~$X$ does not contain~$v_i$ (since~$X \cap B \subseteq S$), and~$X$ does not contain~$u_i$ either (since~$u_i$ is adjacent to all members of~$V_{S'}$). So~$X$ has at most~$f(S')$ vertices from~$V_{S'}$, no vertices of~$\{v_i, u_i\}$ for each~$v_i \in S' \setminus S$, and at most one vertex from each of the remaining~$t - |S' \setminus S|$ edges. It follows that~$|X| \leq f(S') + (t - |S' \setminus S|)$. By Proposition~\ref{prop:swap:subsets} we have~$f(S') - |S' \setminus S| \leq f(S)$, which shows that~$|X| \leq t + f(S)$ and concludes the proof.
\end{proof}

\section{Counting \texorpdfstring{$t$-representative functions}{t-representative functions}} \label{sec:counting}
We say that two $t$-representative functions are distinct if their function values differ on some input. Lemma~\ref{lemma:generalgraphs} shows that for each $t$-representative function~$f$, there exists a $t$-boundaried graph whose normalized boundary function equals~$f$. Together with Corollary~\ref{cor:characteristics}, which says that boundaried graphs with the same normalized boundary function are equivalent under~$\equiv_{\IS,t}$, this establishes a bijection between the equivalence classes of~$\equiv_{\IS,t}$ and the $t$-representative functions. To bound the number of equivalence classes of~$\equiv_{\IS,t}$  it therefore suffices to bound the number of $t$-representative functions. Recall that~$M(t)$ denotes the $t$-th \emph{Dedekind number}, the number of distinct monotone Boolean functions of~$t$ variables.

\begin{lemma} \label{lem:flowerbound}
There are at least~$M(t) - 1$ distinct $t$-representative functions.
\end{lemma}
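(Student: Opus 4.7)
The plan is to exhibit $M(t)-1$ distinct $t$-representative functions by showing that almost every monotone Boolean function, read as a function into $\mathbb{N}$, already satisfies the three properties of Definition~\ref{def:representativefunctions}.

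First I would observe that a monotone Boolean set-function $g \colon 2^B \to \{0,1\}$ with $g(\emptyset)=1$ must, by monotonicity, be constantly $1$. Hence exactly $M(t)-1$ of the $M(t)$ monotone Boolean set-functions on $2^B$ satisfy $g(\emptyset)=0$; call this collection $\mathcal{G}$.

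Next I would take an arbitrary $g \in \mathcal{G}$ and verify, viewing $g$ as a function $2^B \to \mathbb{N}$ with image $\{0,1\} \subseteq \mathbb{N}$, that $g$ is a $t$-representative function. Property~\ref{pty:startzero} is the defining condition $g(\emptyset)=0$, and Property~\ref{pty:monotone} holds by assumption of monotonicity. For Property~\ref{pty:boundedincrease}, for any nonempty $S \subseteq B$ and any $v \in S$ we have $g(S \setminus \{v\}) \geq 0$, so $1 + \min_{v \in S} g(S \setminus \{v\}) \geq 1 \geq g(S)$, and the inequality is satisfied.

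Finally, distinct elements of $\mathcal{G}$ give distinct $t$-representative functions, simply because they already differ as Boolean functions. This yields $|\mathcal{G}| = M(t)-1$ pairwise distinct $t$-representative functions, as desired. I do not expect any real obstacle here; the only point that needs care is confirming that the bounded-increase condition is trivial for $\{0,1\}$-valued functions, which follows from the fact that adding $1$ to a non-negative integer always produces a value of at least $1$.
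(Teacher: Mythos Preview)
Your proposal is correct and follows essentially the same approach as the paper: both exclude the single monotone Boolean function that is constantly~$1$ (equivalently, the one with value~$1$ at~$\emptyset$) and verify that each of the remaining~$M(t)-1$ monotone Boolean functions satisfies Definition~\ref{def:representativefunctions}, with the bounded-increase property being trivial for $\{0,1\}$-valued functions. Your write-up is slightly more explicit in justifying why only one function is excluded and why Property~\ref{pty:boundedincrease} holds, but the argument is the same.
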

\begin{proof}
Consider a monotone Boolean function~$g \colon \{0,1\}^t \to \{0,1\}$, and the derived set-function~$f \colon 2^{[t]} \to \{0,1\}$ as described in Section~\ref{sec:preliminaries}. If~$f$ is not constantly~$1$ (causing it to violate Property~\ref{pty:startzero}), then it is a $t$-representative function since it is monotone by definition, while Property~\ref{pty:boundedincrease} is trivial when the range is~$\{0,1\}$. Hence all the~$M(t) - 1$ monotone Boolean functions that are not constantly~$1$ yield a distinct $t$-representative function.
\end{proof}

It is known that~$M(t) \geq 2^{\binom{t}{\lfloor t/2 \rfloor}}$. To see this, consider the subsets~$\mathcal{S}_t = \binom{[t]}{\lfloor t/2 \rfloor}$ of~$[t]$ of size~$\lfloor t / 2 \rfloor$. For each subset~$\mathcal{S}'_t \subseteq \mathcal{S}_t$ we obtain a different monotone set-function by saying that~$f(S) = 1$ if and only if~$S$ contains one of the subsets in~$\mathcal{S}'_t$. By Stirling's approximation we have~$\binom{t}{\lfloor t/2 \rfloor} \geq 2^t / \sqrt{4 t}$, which implies that~$M(t) \geq 2^{2^t / \sqrt{4t}}$. The following lemma gives an upper bound on the number of $t$-representative functions.

\begin{lemma} \label{lem:fupperbound}
The number of distinct $t$-representative functions is at most~$2^{2^t - 1}$.
\end{lemma}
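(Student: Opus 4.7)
The plan is to show that every $t$-representative function is uniquely determined by a single bit per nonempty subset of $B$, which immediately yields the bound $2^{2^t-1}$ since there are $2^t-1$ nonempty subsets.

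First I would establish the key dichotomy: for every nonempty $S \subseteq B$, the quantity $\delta(S) := f(S) - \max_{v \in S} f(S \setminus \{v\})$ lies in $\{0,1\}$. The lower bound $\delta(S) \geq 0$ follows from monotonicity (Property~\ref{pty:monotone}), which gives $f(S) \geq f(S \setminus \{v\})$ for every $v \in S$, and therefore $f(S) \geq \max_{v \in S} f(S \setminus \{v\})$. For the upper bound $\delta(S) \leq 1$, use bounded increase (Property~\ref{pty:boundedincrease}) together with the elementary observation $\min_{v \in S} f(S \setminus \{v\}) \leq \max_{v \in S} f(S \setminus \{v\})$ to conclude
$$f(S) \leq 1 + \min_{v \in S} f(S \setminus \{v\}) \leq 1 + \max_{v \in S} f(S \setminus \{v\}).$$

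Next I would show that the bit vector $(\delta(S))_{\emptyset \neq S \subseteq B}$ determines $f$ uniquely. Since $f(\emptyset) = 0$ by Property~\ref{pty:startzero}, one can reconstruct $f$ inductively on $|S|$: for $|S| = k \geq 1$, the values $f(S \setminus \{v\})$ for $v \in S$ are already determined (they sit at level $k-1$), and then $f(S) = \max_{v \in S} f(S \setminus \{v\}) + \delta(S)$. Hence the assignment $f \mapsto (\delta(S))_{S \neq \emptyset}$ is an injection from the set of $t$-representative functions into $\{0,1\}^{2^B \setminus \{\emptyset\}}$.

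Finally, since $|2^B \setminus \{\emptyset\}| = 2^t - 1$, the codomain of this injection has cardinality $2^{2^t - 1}$, which proves the claimed upper bound. There is no significant obstacle here; the only subtlety is verifying that the bounded-increase inequality (stated with a min) still forces the max-based difference to be at most one, which is exactly the step that lets us squeeze a single bit out of each nonempty subset rather than $\log_2(t+1)$ bits as in the naive argument.
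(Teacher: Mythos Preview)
Your proof is correct and follows essentially the same approach as the paper: encode each $t$-representative function by a single bit per nonempty subset and reconstruct $f$ inductively from the base case $f(\emptyset)=0$. The only cosmetic difference is that the paper subtracts $\min_{v\in S} f(S\setminus\{v\})$ rather than your $\max_{v\in S} f(S\setminus\{v\})$; both choices yield a value in $\{0,1\}$ and permit the same inductive recovery.
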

\begin{proof}
For a $t$-representative function~$f$, consider the set-function~$f' \colon 2^{[t]} \to \{0,1\}$ given by:
$$f'(S) = \begin{cases}
0 & \mbox{if~$S = \emptyset$,} \\
f(S) - \min_{v \in S}f(S \setminus \{v\}) & \mbox{otherwise.}
\end{cases}$$
By Properties~\ref{pty:monotone} and~\ref{pty:boundedincrease} the function~$f'$ indeed takes values in~$\{0,1\}$. We show that we can recover~$f$ from~$f'$. Define the function~$f^*$ recursively, as follows:
$$f^*(S) = \begin{cases} 0 & \mbox{if~$S = \emptyset$,} \\
f'(S) + \min_{v \in S}f^*(S \setminus \{v\}) & \mbox{otherwise.} \\
\end{cases}$$
It is easy to verify that~$f^* = f$ and therefore that~$f'$ completely characterizes~$f$. The number of $t$-representative functions is therefore bounded by the number of distinct derived functions~$f'$. Since there are~$2^{2^t-1}$ distinct ways to choose the~$0/1$-value of~$f'$ on the~$2^t - 1$ nonempty subsets of~$[t]$ (note that~$f'(S) = 0$ is fixed), this gives the desired upper bound.
\end{proof}

Lemmata~\ref{lem:flowerbound} and~\ref{lem:fupperbound} give the following corollary for each positive integer~$t$.

\begin{corollary}
The number of equivalence classes of~$\equiv_{\IS,t}$ lies between~$2^{2^t/\sqrt{4t}}$ and~$2^{2^t-1}$.
\end{corollary}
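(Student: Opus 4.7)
The plan is to assemble the corollary as a direct consequence of the bijection between equivalence classes and $t$-representative functions, combined with the two bounds just proved.

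First I would make the bijection explicit. Corollary~\ref{cor:characteristics} asserts that two $t$-boundaried graphs are $\equiv_{\IS,t}$-equivalent if and only if they have the same normalized boundary function, so the map sending each equivalence class to the function $\s^0_G$ of any representative $G$ is well-defined and injective. Lemma~\ref{lemma:boundary:char:is:representative} ensures that the image lies in the set of $t$-representative functions, and Lemma~\ref{lemma:generalgraphs} gives surjectivity: every $t$-representative function is realized as $\s^0_G$ for some $t$-boundaried graph $G$. Consequently, the number of equivalence classes of $\equiv_{\IS,t}$ is exactly the number of distinct $t$-representative functions.

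For the upper bound, I simply invoke Lemma~\ref{lem:fupperbound}, which bounds the number of $t$-representative functions by $2^{2^t-1}$. For the lower bound, Lemma~\ref{lem:flowerbound} provides at least $M(t)-1$ such functions. To convert this Dedekind-number bound into the claimed expression $2^{2^t/\sqrt{4t}}$, I would use the estimate $M(t) \geq 2^{\binom{t}{\lfloor t/2 \rfloor}}$ recalled just before Lemma~\ref{lem:fupperbound} (each antichain $\mathcal{S}'_t \subseteq \binom{[t]}{\lfloor t/2 \rfloor}$ determines a distinct monotone set-function that is $1$ precisely on supersets of some element of $\mathcal{S}'_t$), together with the Stirling estimate $\binom{t}{\lfloor t/2 \rfloor} \geq 2^t/\sqrt{4t}$. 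The additive $-1$ is absorbed without any loss, since $M(t)$ dwarfs the claimed lower bound already for small $t$.

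There is essentially no obstacle: the substantive content is entirely carried by Lemmata~\ref{lem:flowerbound} and~\ref{lem:fupperbound} together with the bijection, and all that remains is bookkeeping. The only minor point worth stating carefully is that the bijection reduces a statement about equivalence classes (which a priori could be hard to count) to a statement about $t$-representative functions, at which stage the two lemmas apply verbatim.
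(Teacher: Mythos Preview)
Your proposal is correct and mirrors the paper's approach exactly: the paper states the corollary as an immediate consequence of Lemmata~\ref{lem:flowerbound} and~\ref{lem:fupperbound}, relying on the bijection between equivalence classes and $t$-representative functions spelled out at the opening of Section~\ref{sec:counting} (via Corollary~\ref{cor:characteristics}, Lemma~\ref{lemma:boundary:char:is:representative}, and Lemma~\ref{lemma:generalgraphs}) together with the Dedekind-number estimate recalled in the text. Your write-up simply makes all of these ingredients explicit.
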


\section{Defining planar graphs with given boundary characteristics}\label{sec:treewidthgraphs}

In Lemma~\ref{lemma:generalgraphs} we constructed nonequivalent $t$-boundaried graphs based on distinct $t$-representative functions. The graphs constructed in that lemma have large treewidth and are far from being planar; they contain cliques of size roughly~$2^t$. To derive lower bounds that are meaningful even when protrusion replacement is applied for planar graphs of bounded treewidth, we present an alternative construction to lower bound the number of equivalence classes that contain a planar graph of small pathwidth (and therefore have small treewidth). The following gadget, of which several variations were used in earlier work (cf.~\cite[Theorem 5.3]{Jansen13b} and~\cite{FominS16,LokshtanovMS11a}), will be useful in our construction.

\begin{definition}
Let~$k$ be a positive integer. The \emph{clause gadget of size~$k$} is the graph~$\mathcal{C}_k$ constructed as follows (see Figure~\ref{fig:clausegadget}). For each~$i \in [k]$ create a triangle on vertices~$\{u_i, v_i,w_i\}$. Connect these into a path by adding all edges~$\{w_i, u_{i+1}\}$ for~$i \in [k-1]$. Finally, add vertices~$\vstart,\vend$ and the edges~$\{\vstart, u_1\}$ and~$\{w_k, \vend\}$. The vertices~$(v_1, \ldots, v_k)$ are the \emph{terminals} of the clause gadget.
\end{definition}

\begin{figure}[t]
	\centering
	\begin{subfigure}{0.30\textwidth}
		\includegraphics[height=2.5cm]{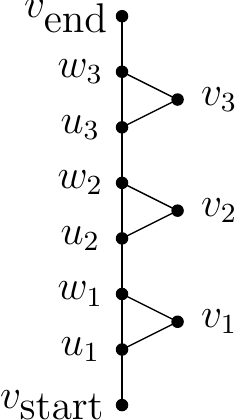}
		\caption{Clause gadget $\mathcal{C}_3$}
		\label{fig:clausegadget}
	\end{subfigure}
	\begin{subfigure}{0.30\textwidth}
		\includegraphics[height=2.5cm]{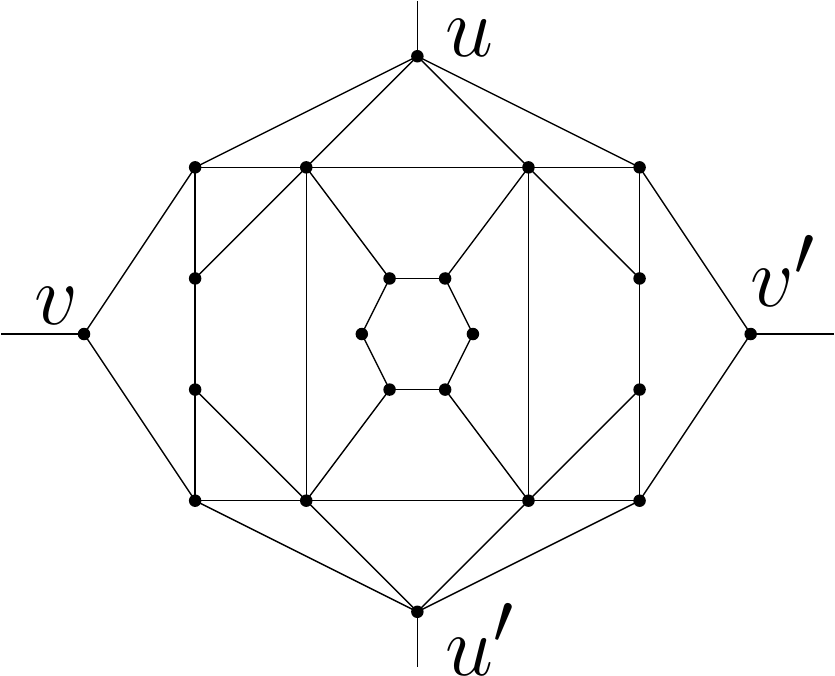}
		\caption{Crossover gadget $G_{\times}$}
		\label{fig:crossover}
	\end{subfigure}
	\begin{subfigure}{0.25\textwidth}
		\centering
		\begin{tabular}{|c|c|c|c|}\hline
			\backslashbox{$j$}{$i$} & $0$ & $1$ & $2$\\\hline
			$0$ & $7$ & $8$ & $8$ \\\hline
			$1$ & $8$ & $9$ & $9$ \\\hline
			$2$ & $7$ & $8$ & $9$ \\\hline
		\end{tabular}
		\caption{Values of $\optis$ in~$G_\times$}
		\label{tab:crossover}
	\end{subfigure}
	\caption{Gadgets for \IndependentSet. The crossover gadget is due to Garey et al.~\cite[Fig.~11 and Table 1]{GareyJS76}. The table on the right shows for all relevant combinations of~$i$ and~$j$ what the maximum size is of an independent set~$X$ satisfying~$|\{v,v'\}\cap X| = i$ and~$|\{u,u'\}\cap X| = j$.}
	\label{fig:crossovergadget}
\end{figure}

\begin{observation}[{Cf. \cite[Obs. 6--8]{FominS16}}] \label{obs:clausegadget}
For each positive~$k \in \mathbb{N}$, the clause gadget~$\mathcal{C}_k$ has the following properties:
\begin{enumerate}
	\item $\optis(\mathcal{C}_k) = k + 2$.
	\item Every maximum independent set in~$\mathcal{C}_k$ contains a terminal vertex~$v_i$ for some~$i \in [k]$.\label{clausegadget:chooses:terminal}
	\item $\forall i \in [k]$ there is a maximum independent set in~$\mathcal{C}_k$ containing~$v_i$ but no other terminals.\label{clausegadget:oneterminal}
	\item $\mathcal{C}_k$ is planar and~$\pw(\mathcal{C}_k) = 2$.
\end{enumerate}
\end{observation}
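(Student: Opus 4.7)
The plan is to exploit one simple structural observation and then verify each of the four properties in turn. The key observation is that the vertices $\vstart, u_1, w_1, u_2, w_2, \ldots, u_k, w_k, \vend$ induce a path $P$ on $2k+2$ vertices in $\mathcal{C}_k$, since $\{u_i,w_i\}$ is a triangle edge and $\{w_i,u_{i+1}\}$ is added explicitly. Each terminal $v_i$ has open neighbourhood exactly $\{u_i,w_i\}$, i.e., two consecutive internal vertices of $P$.

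For (1), the upper bound holds because any independent set uses at most one vertex from each of the $k$ vertex-disjoint triangles plus at most one of $\vstart$ and one of $\vend$, totalling $k+2$; the matching lower bound is witnessed by $\{\vstart, v_1, \ldots, v_k, \vend\}$, whose independence follows from the adjacencies above. For (2), I would argue by contradiction: if a maximum independent set $X$ contains no terminal, then $X \subseteq V(P)$, and the standard bound on independent sets in paths gives $|X| \leq \lceil (2k+2)/2 \rceil = k+1$, which contradicts~(1).

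For (3), given $i \in [k]$, I would exhibit the explicit set
\[
X_i \;:=\; \{\vstart\} \cup \{w_j : 1 \leq j < i\} \cup \{v_i\} \cup \{u_j : i < j \leq k\} \cup \{\vend\}
\]
of size $k+2$, and check independence by going through the candidate edges: $\vstart$'s only neighbour $u_1$ lies outside $X_i$; for each chosen $w_j$ with $j<i$ the neighbour $u_{j+1}$ is excluded (we take $w_{j+1}$, or, when $j+1=i$, no $u$-vertex at all); symmetrically for each chosen $u_j$ with $j>i$; the terminal $v_i$ is adjacent only to $u_i,w_i$, which are absent; and $\vend$'s only neighbour $w_k$ is absent. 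Finally, for (4) planarity is immediate from Figure~\ref{fig:clausegadget}, and $\pw(\mathcal{C}_k) \leq 2$ is witnessed by the path decomposition with bags $\{\vstart, u_1\}, \{u_1,v_1,w_1\}, \{w_1,u_2\}, \{u_2,v_2,w_2\}, \ldots, \{u_k,v_k,w_k\}, \{w_k,\vend\}$, each of size at most~$3$; the matching lower bound $\pw(\mathcal{C}_k) \geq 2$ follows since $\mathcal{C}_k$ contains a triangle, which already has treewidth (and hence pathwidth) equal to~$2$.

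I do not anticipate a serious obstacle here, since once the path $P$ is singled out all four parts reduce to direct verification. The only mildly fiddly step is the case analysis for independence in~(3), which becomes transparent once the edges of $\mathcal{C}_k$ are classified into three families (triangle edges, connecting edges $\{w_i, u_{i+1}\}$, and the two pendant edges at $\vstart$ and $\vend$) and checked against $X_i$ family by family.
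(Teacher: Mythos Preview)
Your proof is correct in all four parts. The paper itself does not supply a proof of this observation; it is stated with a citation to~\cite{FominS16} and used as a black box, so there is no paper proof to compare against. Your argument is the natural direct verification and would serve perfectly well as an inline proof.
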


To ensure our construction yields a planar graph, we use a \emph{crossover gadget} for \IndependentSet due to Garey et al.~\cite{GareyJS76}. It was originally designed for \VertexCover, but since the complement of a maximum independent set is a minimum vertex cover, we can rephrase the properties of the gadget in terms of independent sets. The crossover gadget $G_{\times}$ is the 22-vertex graph illustrated in Figure~\ref{fig:crossover}, which has four terminals~$(u,u',v,v')$. When we have a drawing of a graph~$G$ in which exactly two edges $\{a,b\},\{c,d\}$ cross in a common point, we can \emph{planarize the crossing} by removing edges~$\{a,b\}$ and~$\{c,d\}$, introducing a new copy of~$G_\times$ at the position of the crossing, and adding the edges~$\{a,v\},\{v',b\},\{c,u\},\{u',d\}$. Garey et al.~\cite{GareyJS76} analyzed the size of a maximum independent set in~$G_\times$ when restricting which terminal vertices may occur in the set, as shown in Figure~\ref{tab:crossover}.  As~$G_\times$ is symmetric in both the horizontal and vertical axis, and the table shows that a maximum size independent set size of nine can already be obtained using~$i=1$ of the terminals~$\{v,v'\}$ and~$j=1$ of the terminals~$\{u,u'\}$, we observe the following.

\begin{observation} \label{obs:crossover}
For any choice of terminals~$v^* \in \{v,v'\}$ and~$u^* \in \{u, u'\}$ there is a maximum independent set of size nine in~$G_\times$ that does not contain~$v^*$ or~$u^*$.
\end{observation}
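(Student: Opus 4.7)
The plan is to exploit the two axes of symmetry of $G_\times$ to move an independent set that achieves the optimum to one that avoids the prescribed pair of terminals. The table entry at $(i,j) = (1,1)$ in Figure~\ref{tab:crossover} tells us that there exists a maximum independent set $X$ in $G_\times$ of size nine with $|X \cap \{v, v'\}| = 1$ and $|X \cap \{u, u'\}| = 1$. Let $\tilde{v}$ denote the single element of $X \cap \{v, v'\}$ and $\tilde{u}$ the single element of $X \cap \{u, u'\}$.

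First I would verify that $G_\times$, as drawn in Figure~\ref{fig:crossover}, admits a graph automorphism $\sigma_h$ corresponding to horizontal reflection that swaps $v \leftrightarrow v'$ and fixes $\{u, u'\}$ setwise (likewise mapping $u \leftrightarrow u'$ or fixing them pointwise, whichever the drawing dictates), and an automorphism $\sigma_v$ corresponding to vertical reflection that swaps $u \leftrightarrow u'$ and acts analogously on $\{v, v'\}$. These are exactly the symmetries alluded to in the text preceding the observation.

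Next, starting from $X$, I would apply either $\sigma_h$, $\sigma_v$, both, or neither, chosen so that the image of $\tilde{v}$ is the element of $\{v, v'\} \setminus \{v^*\}$ and the image of $\tilde{u}$ is the element of $\{u, u'\} \setminus \{u^*\}$. Since automorphisms preserve independence and cardinality, the image $X'$ is again an independent set of size nine, and by construction $X' \cap \{v, v'\} = \{v, v'\} \setminus \{v^*\}$ and $X' \cap \{u, u'\} = \{u, u'\} \setminus \{u^*\}$, so neither $v^*$ nor $u^*$ lies in $X'$.

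The only nontrivial step is confirming from the drawing that the claimed reflections really are automorphisms and that they act on the terminal set as described; this is a direct check on the 22-vertex picture in Figure~\ref{fig:crossover} and presents no conceptual obstacle once the drawing is inspected. After that, the symmetry argument is purely formal.
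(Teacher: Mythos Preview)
Your proposal is correct and follows exactly the line the paper takes: the paper does not give a separate proof of this observation, but derives it in the sentence immediately preceding it from the two axial symmetries of~$G_\times$ together with the table entry~$(i,j)=(1,1)$ showing that nine is attainable with one terminal from each pair. Your write-up simply spells out that one-sentence justification in more detail.
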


The following proposition summarizes the essential features of a planarization operation.

\begin{proposition} \label{prop:crossover:pties}
Let~$G$ be a graph drawn in the plane such that no edge contains a vertex in its interior and no more than two edges cross in any single point. Let~$G'$ be the result of planarizing an edge crossing by a crossover gadget. The following holds.
\begin{enumerate}
	\item For every independent set~$X$ in~$G$ there is an independent set~$X'$ in~$G'$ of size~$|X| + 9$ such that~$X' \cap V(G) = X$.\label{crossover:xtoprime}
	\item For every independent set~$X'$ in~$G'$ there is an independent set~$X''$ in~$G'$ with~$|X'| = |X''|$ containing exactly nine vertices from~$G_\times$ with~$X'' \cap V(G) \subseteq X' \cap V(G)$.\label{crossover:moveintocrossing}
	\item For every independent set~$X'$ in~$G'$ there is an independent set~$X$ in~$G$ of size~$|X'| - 9$ such that~$X \subseteq X' \cap V(G)$.\label{crossover:primetox}
	\item $\optis(G') = \optis(G) + 9$.\label{crossover:opt}
\end{enumerate}
\end{proposition}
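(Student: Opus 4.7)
The plan is to prove the four parts in the stated order, leveraging Observation~\ref{obs:crossover} and the table of Figure~\ref{tab:crossover}. Throughout, for an independent set $X'$ in $G'$ let $Y' := X' \cap V(G_\times)$ and $Z' := X' \cap V(G)$; the only edges of $G'$ between $V(G)$ and $V(G_\times)$ are the four bridge edges $\{a,v\}, \{v',b\}, \{c,u\}, \{u',d\}$, while the only edges of $G$ missing from $G'$ are $\{a,b\}$ and $\{c,d\}$.

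For Part~\ref{crossover:xtoprime}, I would use that $\{a,b\}$ and $\{c,d\}$ are edges of $G$ to conclude $|X \cap \{a,b\}| \leq 1$ and $|X \cap \{c,d\}| \leq 1$. Choose $v^\star \in \{v,v'\}$ to be the bridge neighbor of whichever of $a,b$ lies in $X$ (arbitrary if neither does), and $u^\star \in \{u,u'\}$ analogously. Observation~\ref{obs:crossover} supplies a 9-vertex independent set $Y^\star$ of $G_\times$ avoiding $v^\star$ and $u^\star$; then $X' := X \cup Y^\star$ is independent in $G'$ (no bridge edge can be induced by the choice of $v^\star, u^\star$, $X$ is independent in $G$ hence in $G'$, and $Y^\star$ is independent in $G_\times$), with $X' \cap V(G) = X$ and $|X'| = |X| + 9$.

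For Part~\ref{crossover:moveintocrossing}, I would compress $X'$ so it uses the gadget maximally. Start a drop-set $D \subseteq Z'$ containing one vertex from $\{a,b\}$ if $\{a,b\} \subseteq Z'$ and one from $\{c,d\}$ if $\{c,d\} \subseteq Z'$, and pad $D$ with arbitrary additional vertices of $Z'$ until $|D| = 9 - |Y'|$. Let $Z'' := Z' \setminus D$ (which contains at most one vertex from each of $\{a,b\}$ and $\{c,d\}$ by construction), apply Observation~\ref{obs:crossover} to pick a 9-vertex independent set $Y''$ of $G_\times$ avoiding the bridge neighbors of $Z'' \cap \{a,b,c,d\}$, and set $X'' := Z'' \cup Y''$. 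The key verification, which I expect to be the fiddly step, is that $D$ can indeed be chosen of size exactly $9 - |Y'|$ while containing all forced vertices: when $\{a,b\} \subseteq Z'$ the induced bridge edges force $Y' \cap \{v,v'\} = \emptyset$, so the $i=0$ row of the table yields $|Y'| \leq 8$; symmetrically for $\{c,d\} \subseteq Z'$; and if both pairs lie in $Z'$, the $(i,j)=(0,0)$ entry yields $|Y'| \leq 7$. Hence the number of forced vertices is always at most $9-|Y'|$, so $D$ is well-defined as long as $|X'| \geq 9$ (otherwise the conclusion is vacuous, since no nine-vertex independent set of $G_\times$ can sit inside a smaller $X''$).

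Part~\ref{crossover:primetox} follows by exactly the same case analysis without first compressing: if neither $\{a,b\}$ nor $\{c,d\}$ sits in $Z'$, then $Z'$ is already independent in $G$ and has size $|X'| - |Y'| \geq |X'| - 9$ (using $|Y'| \leq 9$ from the maximum table entry); otherwise remove one ``forced'' vertex from each fully contained pair, and the bounds $|Y'| \leq 8$ and $|Y'| \leq 7$ established above show that the resulting subset of $Z'$ is independent in $G$ and has size at least $|X'| - 9$. Finally, Part~\ref{crossover:opt} is immediate by applying Part~\ref{crossover:xtoprime} to a maximum independent set of $G$ for the inequality $\optis(G') \geq \optis(G) + 9$, and Part~\ref{crossover:primetox} to a maximum independent set of $G'$ for the reverse inequality.
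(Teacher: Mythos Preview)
Your argument is correct. Parts~\ref{crossover:xtoprime} and~\ref{crossover:opt} match the paper exactly. For Parts~\ref{crossover:moveintocrossing} and~\ref{crossover:primetox} you take a somewhat different route.

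For Part~\ref{crossover:moveintocrossing}, the paper does a three-way case split on $|Y'|$ (nine, eight, or at most seven), handling each case with an ad-hoc exchange and obtaining only $|X''| \geq |X'|$ in the last case. Your drop-set construction is more uniform: you compute how many $V(G)$-vertices must be sacrificed to make room for a full nine-vertex gadget set, verify via the table that the mandatory drops (one from each fully-contained pair) never exceed this budget, and then pad arbitrarily. This gives $|X''| = |X'|$ on the nose and avoids the case split entirely; it is a cleaner organization of the same table information.

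For Part~\ref{crossover:primetox}, the paper first invokes Part~\ref{crossover:moveintocrossing} to normalize $X'$ so that it uses exactly nine gadget vertices, and then reads off from the table that such a set must already hit one of $\{v,v'\}$ and one of $\{u,u'\}$, forcing $X'' \cap V(G)$ to miss one endpoint of each removed edge. You instead argue directly on $X'$ with the same ``forced vertex'' bookkeeping you developed for Part~\ref{crossover:moveintocrossing}, dropping at most one vertex per fully-contained pair and using the $|Y'| \leq 8$ and $|Y'| \leq 7$ bounds to certify the size. The paper's route is more modular (Part~\ref{crossover:primetox} becomes a corollary of Part~\ref{crossover:moveintocrossing}); yours is self-contained but repeats the case analysis. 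Both are fine.

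Your remark that Part~\ref{crossover:moveintocrossing} is vacuous for $|X'| < 9$ is a fair observation about the statement itself; the paper's proof has the same implicit restriction and the property is only ever applied to maximum independent sets, which are large enough.
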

\begin{proof}
Assume without loss of generality that the crossover gadget was introduced into~$G'$ to eliminate the crossing of edges~$\{a,b\}$ and~$\{c,d\}$.

\textbf{(\ref{crossover:xtoprime})} Since~$X$ is independent in~$G$ it contains at most one vertex from~$\{a,b\}$ and at most one vertex from~$\{c,d\}$, implying that at most one vertex~$v^* \in \{v,v'\}$ has a $G'$-neighbor in~$X$, and at most one vertex~$u^* \in \{u,u'\}$ has a $G'$-neighbor in~$X$. Let~$X_\times$ be a maximum independent set in~$G_\times$ that does not contain~$v^*$ or~$u^*$, which exists by Observation~\ref{obs:crossover}. Then~$X \cup X_\times$ is independent in~$G'$ and has size~$|X|+9$. Note that~$X' \cap V(G) = X$.

\textbf{(\ref{crossover:moveintocrossing})} Consider an arbitrary independent set~$X'$ in~$G'$. If~$X'$ contains nine vertices from~$G_\times$ we are done, so assume this is not the case. If~$X'$ contains at most seven vertices from~$G_\times$, then let~$X_\times$ be an independent set in~$G_\times$ of size nine using only the terminal vertices~$v$ and~$u$ (which exists by Observation~\ref{obs:crossover}), and define~$X'' := (X' \setminus (V(G_\times) \cup \{a,c\})) \cup X_\times$. Since we gain at least two vertices within the crossover gadget, we compensate for the loss of two vertices from~$X' \cap V(G)$ and therefore~$|X''| \geq |X'|$. It is easy to verify that~$X''$ is independent, and~$X'' \cap V(G) = (X' \cap V(G)) \setminus \{a,c\} \subseteq X' \cap V(G)$.

If~$X'$ contains eight vertices from~$G_\times$, then the table in Figure~\ref{tab:crossover} shows that~$X'$ contains at least one terminal vertex of~$G_\times$ and we can do a similar exchange. Let~$x$ be a terminal of~$G_\times$ contained in~$X'$ and let~$y$ be a terminal from~$G_\times$ that is not paired up with~$x$. By Observation~\ref{obs:crossover} there is an independent set~$X_\times$ in~$G_\times$ of size nine whose only terminals are~$x$ and~$y$. Now remove from~$X'$ the $G'$-neighbor that~$y$ has in~$V(G)$ (if any), remove the eight vertices from the crossover gadget, and replace them by~$X_\times$. We obtain an independent set~$X''$ of size at least~$|X'|$ containing nine vertices from~$G_\times$ with~$X'' \cap V(G) \subseteq X' \cap V(G)$.

\textbf{(\ref{crossover:primetox})} Consider an arbitrary independent set~$X'$ in~$G'$. By Property~\ref{crossover:moveintocrossing} there exists an independent set~$X''$ that is not smaller than~$X'$, uses exactly nine vertices from~$G_\times$, with~$X'' \cap V(G) \subseteq X' \cap V(G)$. The table in Figure~\ref{tab:crossover} shows that an independent set in~$G_\times$ of size nine contains at least one vertex from~$\{v,v'\}$ and at least one vertex from~$\{u,u'\}$. Since~$v$ and~$v'$ are connected in~$G'$ to~$a$ and~$b$, respectively, this implies that there is a vertex among~$a,b$ that is not contained in~$X''$. Similarly, the table shows that~$X''$ contains a vertex from~$\{u,u'\}$. Since these are connected to~$c$ and~$d$, one of these vertices is not in~$X''$. All edges of~$G$ also occur in~$G'$, except for the edges~$\{a,b\}$ and~$\{c,d\}$. As we have just argued that~$X''$ contains at most one vertex from each of these two edges, it follows that~$X'' \cap V(G)$ is an independent set in~$G$. Since~$X'' \cap V(G) \subseteq X' \cap V(G)$ and~$|X'' \cap V(G)| = |X''| - 9$, this proves the claim.

\textbf{(\ref{crossover:opt})} Property~\ref{crossover:xtoprime} implies that~$\optis(G') \geq \optis(G) + 9$. Property~\ref{crossover:primetox} implies that~$\optis(G) \geq \optis(G') - 9$, establishing equality.
\end{proof}

In most applications of crossover gadgets, the only important property is that they have a fixed effect on the optimum (Property~\ref{crossover:opt}). In our case we also have to ensure that the crossover gadgets do not disturb how the solutions intersect the boundary of the graph. Properties~\ref{crossover:xtoprime}--\ref{crossover:primetox} will be used for this purpose. Using these gadgets we present the construction.

\begin{lemma}\label{lemma:planargraph}
Let~$t$ be a positive integer and~$B := \{p_{1,1}, p_{2,1},\ldots, p_{t-1,1}, p_{t,1}\}$. For every non-constant monotone set-function~$f \colon 2^B \to \{0,1\}$ there is a planar graph~$G$ with boundary~$B$ such that~$\pw(G) \leq t + \Oh(1)$ and for every~$S \subseteq B\colon f(S) = 1$ if and only if~$\s_G(S) = \optis(G)$.
\end{lemma}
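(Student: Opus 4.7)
The plan is to represent $f$ as a monotone CNF via Proposition~\ref{prop:monotone:cnf} and then build $G$ from \emph{variable-path gadgets} and \emph{clause gadgets}, whose joint maximum independent set encodes satisfiability of the formula by the indicator set of the boundary intersection. Writing $f$ as $\phi = \bigwedge_{j=1}^{m}\bigvee_{i \in I_j} x_i$ with every $I_j\subseteq[t]$ nonempty, I introduce for each variable~$i\in[t]$ a path~$P_i = p_{i,1} p_{i,2} \cdots p_{i,2m}$ on~$2m$ vertices, designate~$p_{i,1}$ as the $i$-th boundary vertex, and record the key combinatorial fact that in every maximum independent set of~$P_i$ (which has size exactly~$m$) excluding any even-indexed vertex~$p_{i,2j}$ forces~$p_{i,1}$ to be included: removing~$p_{i,2j}$ splits~$P_i$ into a sub-path~$p_{i,1}\cdots p_{i,2j-1}$ of odd length~$2j-1$ whose unique maximum independent set of size~$j$ contains both endpoints. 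For each clause~$C_j$ I attach a fresh copy of the clause gadget~$\mathcal{C}_{|I_j|}$ with terminals named $v_{C_j,i}$ for $i\in I_j$, and I add the edges $\{p_{i,2j},v_{C_j,i}\}$ linking each even-indexed connector of variable~$i$'s path to the corresponding terminal of the $j$-th clause gadget.

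Next I establish the correspondence. Let~$N := tm + \sum_{j=1}^{m}(|I_j|+2)$. Combining the variable-path bound with Observation~\ref{obs:clausegadget}(\ref{clausegadget:chooses:terminal}) and Lemma~\ref{lemma:opt:from:functions} shows $\optis(G)=N$, and every maximum IS~$X$ uses exactly~$m$ vertices of every~$P_i$ and contains a terminal from every clause gadget. Since a terminal~$v_{C_j,i}$ can lie in~$X$ only if~$p_{i,2j}\notin X$, the key property above then forces~$p_{i,1}\in X$, and together with~$X\cap B\subseteq S$ this yields $i\in S$; hence~$\s_G(S)=N$ implies that every clause contains some literal $i\in S$, i.e.\ $f(S)=1$. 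Conversely, if~$f(S)=1$ I put $\{p_{i,2k-1}:k\in[m]\}$ into~$X$ when~$i\in S$ and $\{p_{i,2k}:k\in[m]\}$ into~$X$ otherwise, pick for each $j$ some $i^*\in I_j\cap S$, and invoke Observation~\ref{obs:clausegadget}(\ref{clausegadget:oneterminal}) to select a maximum IS of the $j$-th clause gadget whose only terminal is~$v_{C_j,i^*}$ (whose connector~$p_{i^*,2j}$ is indeed absent from~$X$). This produces an IS of size~$N$ with $X\cap B\subseteq S$, giving $\s_G(S)=N=\optis(G)$.

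Finally I planarize and bound pathwidth. I draw $G$ in a grid with row~$i$ occupied by~$P_i$ and the vertices of the $j$-th clause gadget placed in a short vertical strip near column~$2j$, with each edge $\{p_{i,2j},v_{C_j,i}\}$ realized as a short vertical link; the only crossings arise when a horizontal path segment in a row~$i\notin I_j$ passes through the vertical strip of clause~$j$, and I planarize each such crossing independently by inserting a copy of the crossover gadget~$G_\times$. Proposition~\ref{prop:crossover:pties} implies that every insertion shifts $\optis(G\oplus F)$ and $\s_G(S)$ by the same constant~$9$, preserving the equivalence $\s_G(S)=\optis(G)\iff f(S)=1$. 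For pathwidth, I sweep left-to-right through the grid, keeping in every bag one \emph{live} vertex from each row~$i$ (contributing~$t$ vertices), the at most three active vertices of the clause gadget currently under construction (using $\pw(\mathcal{C}_k)=2$ from Observation~\ref{obs:clausegadget}), and the $\Oh(1)$ internal vertices of at most one crossover gadget being processed. I expect the main obstacle to be this pathwidth analysis: the path-advances across the~$t$ rows must be interleaved so that at any bag only a single row contributes two consecutive path vertices rather than all rows at once, and the crossovers encountered inside one column must be processed one at a time before moving on. Once that scheduling is made explicit, each bag has size~$t+\Oh(1)$ and therefore $\pw(G)\le t+\Oh(1)$ as required.
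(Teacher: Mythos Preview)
Your proposal is correct and follows essentially the same approach as the paper: both encode~$f$ as a monotone CNF, build the graph from length-$2m$ variable paths whose first vertices form the boundary, attach clause gadgets~$\mathcal{C}_{|I_j|}$ via edges to even-indexed path vertices, argue the correspondence exactly as you do, and then planarize the resulting drawing with crossover gadgets using Proposition~\ref{prop:crossover:pties} before bounding pathwidth by a left-to-right sweep. The only cosmetic differences are that the paper places all clause gadgets \emph{above} the stack of paths (so crossings are always between a clause--path connector edge and a path edge~$\{p_{j,2i-1},p_{j,2i}\}$) and phrases the pathwidth bound as a mixed-search strategy rather than an explicit bag sequence; your citation of Lemma~\ref{lemma:opt:from:functions} for computing~$\optis(G)$ is also superfluous, since the value follows directly from the per-path and per-gadget counts.
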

\begin{proof}
Consider a monotone set-function~$f$ and let~$\phi$ be a monotone CNF formula that represents~$f$ in the sense of Proposition~\ref{prop:monotone:cnf}. Let the clauses of~$\phi$ be~$C_1, \ldots, C_m$ such that each clause~$C_i$ is a subset of~$[t]$ giving the indices of the variables appearing in the clause. Since~$\phi$ is monotone, all variables appear positively. The number of literals in~$C_i$ is denoted~$|C_i|$.

We first construct a nonplanar graph~$G_\phi$ of small pathwidth such that for all~$S \subseteq B$ we have~$f(S) = 1$ if and only if~$\s_{G_\phi}(S) = \optis(G_\phi)$. Then we will use crossover gadgets to turn~$G_\phi$ into a planar graph~$G'_\phi$ while preserving these properties. The construction is inspired by a reduction of Lokshtanov et al.~\cite[Thm.~3.1]{LokshtanovMS11a}, and proceeds as follows.
\begin{enumerate}
	\item We start by creating $t$ paths $P_1,\ldots,P_t$, where every path $P_i$ for~$i \in [t]$ consists of $2m$ vertices $p_{i,1},\ldots,p_{i,2m}$. The boundary~$B=\{p_{1,1},\ldots,p_{t,1}\}$ of graph~$G_\phi$ contains the first vertex from each path.
	\item For each clause~$i \in [m]$, add a copy of the clause gadget~$\mathcal{C}_{|C_i|}$ to the graph and denote its terminals by~$(v_1, \ldots, v_{|C_i|})$. Let~$\ell(j)$ denote the $j$-th variable in the clause for each~$j \in [|C_i|]$ and sort these such that~$\ell(1) > \ell(2) > \ldots > \ell(|C_i|)$; this will be useful later on when planarizing the graph. For each~$j \in [|C_i|]$ make terminal~$v_j$ in the clause gadget adjacent to vertex~$p_{\ell(j), 2i}$ on path~$P_{\ell(j)}$. Observe that clause gadgets only connect to \emph{even-numbered} vertices on the paths.
\end{enumerate}

\begin{claim} \label{claim:gphi:properties}
The graph~$G_\phi$ with boundary~$B := \{p_{1,1}, \ldots, p_{t,1}\}$ satisfies:
\begin{enumerate}
			\item $\s_{G_\phi}(B) = \optis(G_\phi) \leq mt +\sum_{1\leq i\leq m} (|C_i| + 2)$. \label{gphi:smallis}
			\item $\s_{G_\phi}(B) = \optis(G_\phi) = mt +\sum_{1\leq i\leq m} (|C_i| + 2)$. \label{gphi:bigis}
			\item For each~$S \subseteq B$ we have~$f(S) = 1$ if and only if~$\s_{G_\phi}(S) = \optis(G_\phi)$.\label{gphi:representsf}
\end{enumerate}
\end{claim}
\begin{claimproof}
We prove the properties of~$G_\phi$ one by one.

\textbf{(\ref{gphi:smallis})} By Definition~\ref{def:subsetfunc} we have $s_{G_\phi}(B) = \optis(G_\phi)$. Observe that $G_\phi$ consists of $t$ paths of $2m$ vertices each, and $m$ clause gadgets. As an independent set on a path never contains two subsequent vertices in the path, any independent set in~$G_\phi$ contains at most $m$ vertices from each of the~$t$ paths. By Observation~\ref{obs:clausegadget} we know that for each clause~$C_i$ of size~$|C_i|$, an independent set contains at most~$|C_i| + 2$ vertices from the created clause gadget. Hence an independent set in~$G_\phi$ has size at most~$mt+\sum_{1\leq i\leq m} (|C_i| + 2)$.

\textbf{(\ref{gphi:bigis}}) Consider the set~$X$ defined as follows:
\begin{itemize}
	\item Set~$X$ contains the odd-numbered vertices~$p_{i,1}, \ldots, p_{i,2m-1}$ from each path~$P_i$ with~$i \in [t]$.
	\item For each clause~$i \in [m]$, let~$X_i$ be an independent set of size~$|C_i| + 2$ in the clause gadget created for~$C_i$; such a set exists by Observation~\ref{obs:clausegadget}. Add~$X_i$ to~$X$ for each clause~$i \in [m]$.
\end{itemize}
It is easy to see that~$X$ has the desired size. To see that~$X$ is an independent set, note that the vertices we choose from each path form an independent set and that there are no edges between vertices on paths~$P_i$ and~$P_j$ for~$i \neq j$. Finally, observe that the clause gadgets are only connected to the rest of the graph through terminals, and that the terminals are adjacent to even-numbered vertices on the paths. As~$X$ only contains odd-numbered vertices from the paths, it follows that~$X$ is an independent set in~$G_\phi$ and therefore that~$\optis(G_\phi) \geq mt + \sum_{1 \leq i \leq m} (|C_i| + 2)$. Together with Property~\ref{gphi:smallis} of Claim~\ref{claim:gphi:properties} this shows that~$\optis(G_\phi)$ is \emph{exactly} the given number.

\textbf{(\ref{gphi:representsf})} We show that~$f(S) = 1$ if and only if there is a maximum independent set in~$G_\phi$ whose intersection with the boundary is a subset of~$S$. Since~$\phi$ represents~$f$, it is sufficient to argue that a $0/1$-assignment to variables~$x_1, \ldots, x_t$ satisfies the CNF~$\phi$ if and only if there is a maximum independent set~$X$ in~$G_\phi$ with~$X \cap B \subseteq S$ for the set~$S := \{ p_{i, 1} \mid i \in [t], x_i = 1 \}$. We will prove the two directions of this equivalence separately.

\textbf{(\ref{gphi:representsf}, $\Rightarrow$)} Assume we have an assignment of variables $x_1,\ldots,x_t\in\{0,1\}$ such that $\phi(x_1,\ldots,x_t) = 1$. This means that every clause $C_i$ has at least one positive literal~$\ell(j)$ that is set to~$1$. We construct an independent set of size~$mt + \sum_{i=1}^m (|C_i| + 2)$ in~$G_\phi$ which only contains boundary vertices whose corresponding variable is set to~$1$, as follows.
\begin{itemize}
	\item For each~$i \in [t]$ the set~$X$ contains the odd-numbered vertices~$p_{i,1}, \ldots, p_{i, 2m-1}$ if~$x_i = 1$, and it contains the even-numbered vertices~$p_{i,2}, \ldots, p_{i,2m}$ if~$x_i = 0$. This ensures that~$X$ indeed only contains boundary vertices whose variable is set to~$1$.
	\item For each clause~$i \in [m]$, some variable~$\ell(j)$ appearing in clause~$C_i$ is set to~$1$ since the assignment satisfies~$\phi$. Let~$X_i$ be an independent set of size~$|C_i| + 2$ in the clause gadget created for~$C_i$, such that~$v_{\ell(j)}$ is the only terminal vertex contained in~$X_i$. Such a set exists by Property~\ref{clausegadget:oneterminal} of Observation~\ref{obs:clausegadget}. Add~$X_i$ to~$X$ for each clause~$i \in [m]$.
\end{itemize}
The constructed set~$X$ has the claimed size, which shows it has the size of a maximum independent set by Property~\ref{gphi:bigis} of Claim~\ref{claim:gphi:properties}. It remains to prove that~$X$ is an independent set. The only nontrivial part is to show that the vertices chosen in the clause gadgets are not adjacent to those chosen on the paths. Consider an arbitrary clause~$C_i$. The set~$X_i$ we added to~$X$ for this clause contains only one terminal vertex, for a variable~$v_{\ell(j)}$ that was set to~$1$. By our construction of~$X$, this implies~$X$ contains the odd-numbered vertices on path~$P_{\ell(j)}$. Since terminal vertices are only adjacent to even-numbered vertices of the path, no vertex of~$X_i$ is adjacent to a path-vertex in~$X$, showing that~$X$ is independent.

\textbf{(\ref{gphi:representsf}, $\Leftarrow$)} Consider an arbitrary set~$S\subseteq B$ for which~$\s_{G_\phi}(S) = \optis(G_\phi)$, which equals~$mt+\sum_{1\leq i\leq m} (|C_i| + 2)$ by Property~\ref{gphi:bigis} of Claim~\ref{claim:gphi:properties}. By Definition~\ref{def:subsetfunc}, we know that there is an independent set $X$ of size $mt+\sum_{1\leq i\leq m} (|C_i| + 2)$ in $G_\phi$ such that $X\cap B \subseteq S$.

An independent set contains at most~$m$ vertices from each of the~$t$ paths, and contains at most~$|C_i| + 2$ vertices from each gadget constructed for a clause~$i \in [m]$ by Observation~\ref{obs:clausegadget}. It follows that to attain its claimed size,~$X$ must contain \emph{exactly}~$m$ vertices from each path and \emph{exactly}~$|C_i|+2$ vertices from each clause gadget~$i \in [m]$. By Property~\ref{clausegadget:chooses:terminal} of Observation~\ref{obs:clausegadget} it follows that~$X$ contains a terminal vertex from each clause gadget. 

Suppose that~$v_{\ell(j)}$ is a terminal vertex of the gadget for clause~$C_i$ and~$v_{\ell(j)}$ belongs to~$X$. Since~$v_{\ell(j)}$ is adjacent to~$p_{\ell(j), 2i}$, the latter vertex is not contained in~$X$. Since~$X$ contains~$m$ vertices from the path~$P_{\ell(j)}$, it contains exactly one vertex from each pair~$\{p_{\ell(j), 2k-1}, p_{\ell(j), 2k}\}$ for~$1 \leq k \leq m$. As~$X$ does not contain the even-numbered vertex~$p_{\ell(j), 2i}$, it contains the odd-numbered vertex~$p_{\ell(j), 2i-1}$ from the pair~$\{p_{\ell(j), 2i-1}, p_{\ell(j), 2i}\}$. Since the vertices are connected in a path, this propagates to lower indices and ensures that from all pairs with~$1 \leq k \leq i$ the set~$X$ contains the odd-numbered vertex. In particular, this shows that~$X$ contains~$p_{\ell(j), 1}$. Since this is a boundary vertex, and~$X$ intersects the boundary in a subset of~$S$, it follows that~$p_{\ell(j), 1} \in S$.

The argument given above shows that each clause of~$\phi$ contains a variable whose corresponding boundary vertex is contained in~$S$. Since~$\phi$ has no negated literals, it follows that~$\phi$ is satisfied by setting the variables corresponding to the vertices in~$S$ to~$1$. As~$\phi$ represents the monotone Boolean function~$f$, we find that~$f$ outputs~$1$ on~$S$.
\end{claimproof}

\begin{figure}[t]
\centering
\includegraphics[width=.9\textwidth]{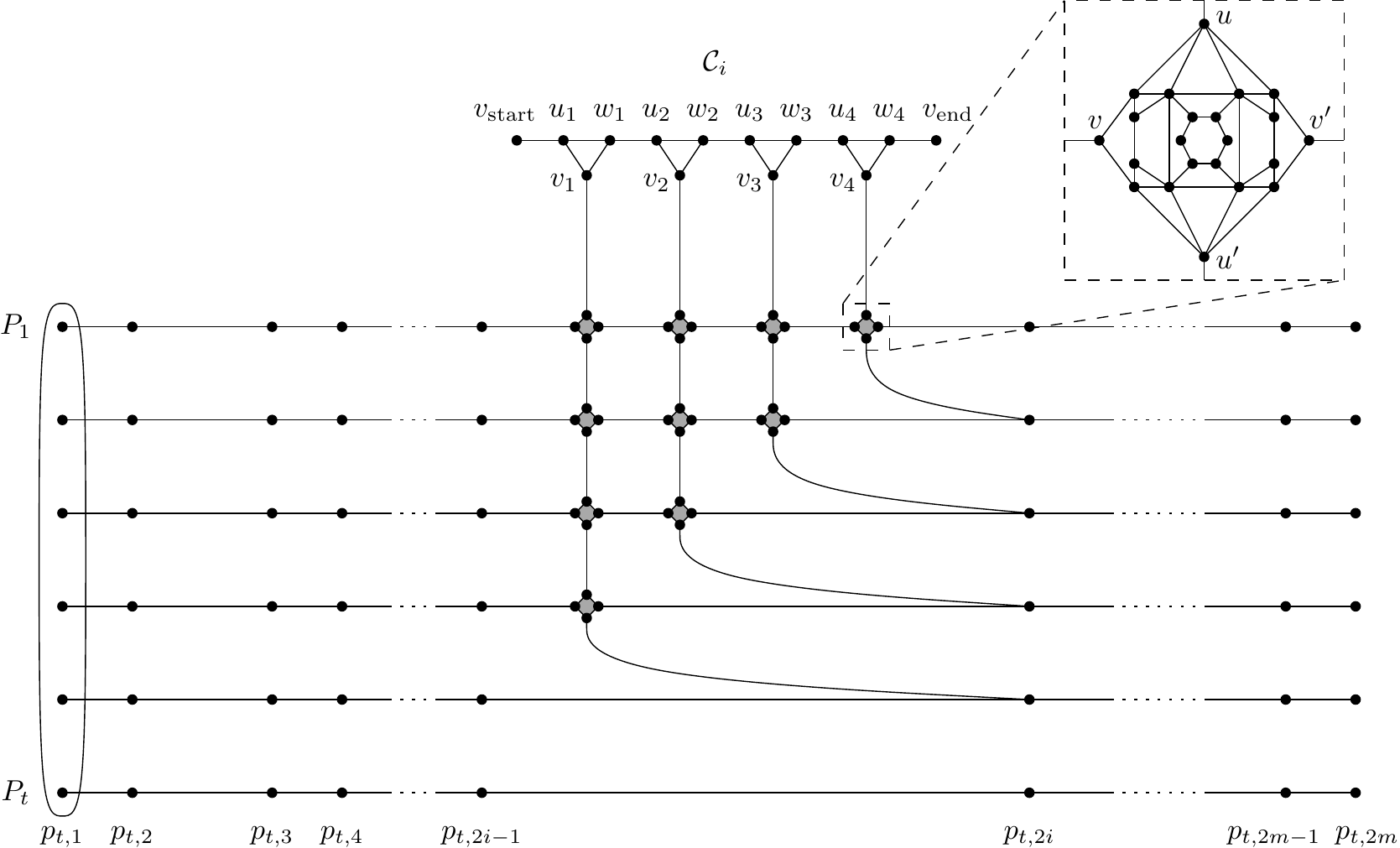}
\caption{Planarizing the graph~$G_\phi$ to obtain~$G'_{\phi}$ in the proof of Lemma~\ref{lemma:planargraph}. Only the clause gadget for the clause~$C_i = (x_5 \vee x_4 \vee x_3 \vee x_2)$ is shown. Shaded diamonds represent crossover gadgets. The boundary~$B$ of the graph is circled, containing the first vertex from each path.}
\label{fig:planarizing}
\end{figure}

Claim~\ref{claim:gphi:properties} shows that the boundary function of~$G_\phi$ expresses the monotone Boolean function~$f$. The same argumentation as used by Lokshtanov et al.~\cite[Lemma 3.3]{LokshtanovMS11a} shows that~$G_\phi$ has pathwidth~$t + \Oh(1)$. However, we will not prove this here for the non-planar graph~$G_\phi$; we will prove a pathwidth bound after planarizing the graph. The planarization starts from a drawing of~$G_\phi$ in the plane in which the crossings have a fixed structure. This drawing is defined as follows (see Figure~\ref{fig:planarizing}):

\begin{itemize}
	\item Draw each path~$P_1, \ldots, P_t$ horizontally. Place the paths above each other so that~$P_1$ is the highest and~$P_t$ is the lowest.
	\item For each clause~$i \in [m]$ of~$\phi$, draw the clause gadget in a planar fashion above the paths, so that its terminals stick out at the bottom, the lowest-indexed terminal on the left and the highest-numbered terminal on the right. Draw the gadget for clause~$i$ between the vertical lines containing the $2i-1$-th and the $2i$-th vertices on each path. Consider the set of edges $E_{C_i}$ connecting the gadget for clause~$C_i$ to the vertices of the paths. By construction of~$G_\phi$, the gadget only connects to vertices with index~$2i$ on the paths. Draw the edges from~$E_{C_i}$ in such a way that $e \in E_{C_i}$ only crosses the edges between the vertices $p_{j,2i-1}$ and $p_{j,2i}$ of the paths $P_j$ for~$j \in [t]$, and do not cross any other edge $e' \in E_{C_i}$. Since the left-to-right order of the variables in a clause matches the order in which the paths are laid out from top to bottom, this is possible.
\end{itemize}

Based on this drawing we planarize the graph~$G_\phi$ by repeatedly replacing crossings by crossover gadgets, resulting in a planar graph~$G'_\phi$ as shown in Figure~\ref{fig:planarizing}. Let~$N$ denote the number of crossover gadgets which were introduced during the planarization process. By Proposition~\ref{prop:crossover:pties} we know that~$\optis(G'_\phi) = \optis(G_\phi) + 9N = mt + 9N +\sum_{1\leq i\leq m} (|C_i| + 2)$, where we use Property~\ref{gphi:bigis} of Claim~\ref{claim:gphi:properties} for the second equality. To conclude the proof, it remains to show that~$\pw(G'_\phi) \leq t + \Oh(1)$ (Claim~\ref{claim:phiprime:pathwidth}) and that for all subsets~$S \subseteq B$ we have~$f(S) = 1$ if and only if~$\s_{G'_\phi}(S) = \optis(G'_\phi)$ (Claim~\ref{claim:phiprime:representsf}).

\begin{claim} \label{claim:phiprime:representsf}
For every~$S \subseteq B$ we have~$f(S) = 1$ if and only if~$\s_{G'_\phi}(S) = \optis(G'_\phi)$.
\end{claim}
\begin{claimproof}
\textbf{($\Rightarrow$)} Suppose that~$f(S) = 1$ for~$S \subseteq B$. By Property~\ref{gphi:representsf} of Claim~\ref{claim:gphi:properties}, we have~$\s_{G_\phi}(S) = \optis(G_\phi)$, implying~$G_\phi$ has a maximum independent~$X$ with~$X \cap B \subseteq S$. By repeated application of Property~\ref{crossover:xtoprime} of Proposition~\ref{prop:crossover:pties}, this implies that~$G'_\phi$ has an independent set~$X'$ of size~$|X| + 9N = \optis(G'_\phi)$ with~$X' \cap V(G_\phi) = X \cap V(G_\phi)$, implying in particular that~$X' \cap B \subseteq S$.

\textbf{($\Leftarrow$)} Suppose that~$\s_{G'_\phi}(S) = \optis(G'_\phi)$. Using Definition~\ref{def:subsetfunc} this implies that~$G'_\phi$ has a maximum independent set~$X'$ with~$X' \cap B \subseteq S$. By repeated application of Property~\ref{crossover:primetox} of Proposition~\ref{prop:crossover:pties} this implies that~$G$ has an independent set~$X$ of size~$\optis(G'_\phi) - 9N = \optis(G_\phi)$ such that~$X \subseteq X' \cap V(G_\phi)$. Hence~$X \cap B \subseteq X' \cap V(G_\phi) \cap B = X' \cap B \subseteq S$. Hence~$G_\phi$ has a maximum independent set intersecting~$B$ in a subset of~$S$. By Property~\ref{gphi:representsf} of Claim~\ref{claim:gphi:properties} this implies that~$f(S) = 1$.
\end{claimproof}

\begin{claim} \label{claim:phiprime:pathwidth}
The graph~$G'_\phi$ has pathwidth~$t + \Oh(1)$.
\end{claim}
\begin{claimproof}
To bound the pathwidth of~$G'_\phi$ we use a \emph{Mixed Search Game}~\cite{takahashi1995mixed}, following Lokshtanov et al.~\cite[Lemma 3.3]{LokshtanovMS11a}. We interpret the graph as a network of contaminated tunnels. Initially, all the edges are contaminated; the goal is to clean the edges by cleaners. Cleaners reside at vertices and slide along edges. An edge is cleaned when both endpoints contain a cleaner, or when a cleaner slides along the edge. Clean edges are recontaminated when there is a path without cleaners from a contaminated edge to a clean edge. It is known that the minimum number of cleaners needed to clean the graph is an upper bound on its pathwidth~\cite{takahashi1995mixed}. We present a strategy to clean~$G'_\phi$ using~$t + \Oh(1)$ cleaners. The cleaning process proceeds in \emph{rounds} corresponding to the clauses of~$\phi$. For~$i \in [m]$, the $i$-th round starts in this situation:
\begin{itemize}
	\item There are cleaners on all~$t$ vertices~$p_{j, 2i-1}$ for~$j \in [t]$.
	\item All edges incident on clause gadgets for clauses~$i' < i$ are clean.
	\item The edges incident on path vertices~$p_{j, i'}$ for~$j \in [t]$ and~$i' < 2i-1$ are clean.
	\item The crossover gadgets that were introduced to eliminate crossings of edges incident on the gadgets for clauses~$i' < i$ are clean.
\end{itemize}
Note that these preconditions trivially hold at the beginning of the first round for~$i=1$. A round for clause~$i \in [m]$ starts by placing a cleaner on the \vstart vertex of the gadget for clause~$i$ and sliding it into~$u_1$. Then we consider each path in turn, from bottom to top. When the cleaner on the path~$P_j$ is already at its target position for this round (vertex~$p_{j,2i}$) then we leave it there. Otherwise, we slide the cleaner over the edge to its right. If the arrival vertex is a terminal of a crossover gadget, then we temporarily place three additional cleaners on the other terminals of the crossover gadget, and clean the interior of the crossover gadget (which can be done using at most four temporary cleaners). Afterward we remove the cleaners from the south and west terminals of the crossover gadget, and continue with the path one higher. After the crossover gadget for the higher path has also been cleaned, we can remove the cleaner from the north terminal of the lower path. In this way we propagate upwards, moving the cleaner on each path beyond the next crossover gadget. When we reach the topmost path we use one temporary cleaner in the clause gadget, together with the cleaner that was already in the clause gadget, to clean the next triangle on the crossover gadget. Using Figure~\ref{fig:cleaning} it is straightforward to verify that by repeating this process for all literals in the clause, we can clean the entire clause gadget and move the cleaners onto the starting position for the next round. After the clause gadget is fully cleaned, we remove the cleaners from it before starting the next round.

\begin{figure}[t]
\centering
\includegraphics[width=.9\textwidth]{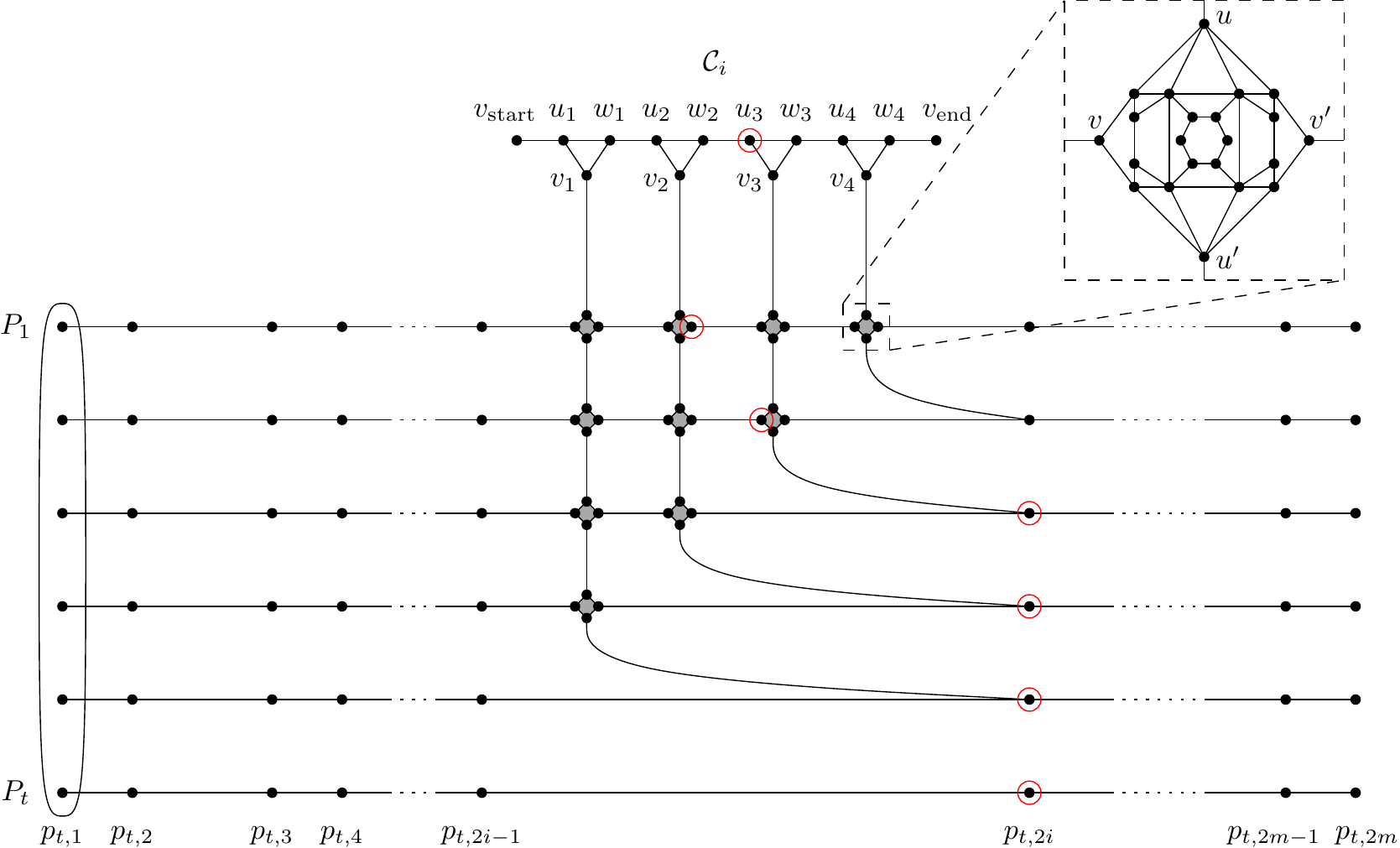}
\caption{Sketch of the cleaning process for the graph~$G'_\phi$. Cleaners are located at the vertices highlighted by hollow (red) circles. The left part of the graph is already cleaned. Only the clause gadget for the clause~$C_i = (x_5 \vee x_4 \vee x_3 \vee x_2)$ is shown. Shaded diamonds represent crossover gadgets.}
\label{fig:cleaning}
\end{figure}

To see that we never need more than~$t + \Oh(1)$ cleaners simultaneously, observe that:
\begin{itemize}
	\item We never have cleaners on more than one clause gadget simultaneously.
	\item The number of cleaners needed in a clause gadget, or in any single crossover gadget, is constant.
	\item At any point in the process, there are at most two crossover gadgets from which cleaners occupy more than one terminal. These are the two crossover gadgets where the cleaning process propagates upwards.
	\item In an idle state, there is exactly one cleaner on each of the paths~$P_j$ for~$j \in [t]$.
\end{itemize}
It follows that~$G'_\phi$ can be cleaned with~$t + \Oh(1)$ cleaners and therefore has pathwidth at most~$t + \Oh(1)$. A more careful analysis shows that~$\pw(G'_\phi) \leq t+6$.
\end{claimproof}

This concludes the proof of Lemma~\ref{lemma:planargraph}.
\end{proof}

\section{Lower bound for protrusion replacement} \label{sec:lowerbounds:protrusions}
To leverage the construction of Lemma~\ref{lemma:planargraph} into a lower bound on the critical size of a set of representatives, we need the following lemma. Observe that its second condition shows that no pair of graphs from the constructed set~$\mathcal{G}$ is equivalent under~$\equiv_{\IS,t}$, and this is witnessed already by gluing planar graphs of pathwidth one onto them. This implies that in any protrusion reduction scheme applied to planar graphs that aims to replace occurrences of bounded-pathwidth protrusions by representatives, there should be a distinct representative for each graph in~$\mathcal{G}$.

\begin{lemma} \label{lemma:nonequivalent:planar}
For each positive integer~$t$ there is a set~$\mathcal{G}$ of~$M(t) - 2$ distinct $t$-boundaried planar graphs of pathwidth~$t + \Oh(1)$, such that for each pair of distinct graphs~$G_f,G_{f'} \in \mathcal{G}$ there are two indicator graphs~$I_{S}$ and~$I_B$ as in Definition~\ref{def:indicator} such that:
\begin{enumerate}
	\item The graphs~$G_f \oplus I_S, G_f \oplus I_B, G_{f'} \oplus I_S$, $G_{f'} \oplus I_B$ are planar and have pathwidth~$t + \Oh(1)$.
	\item $\optis(G_f \oplus I_S) - \optis(G_{f'} \oplus I_S) \neq \optis(G_f \oplus I_B) - \optis(G_{f'} \oplus I_B)$.\label{noneqv:planargraphs}
\end{enumerate}
\end{lemma}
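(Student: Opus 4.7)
The plan is to apply Lemma~\ref{lemma:planargraph} to each of the $M(t) - 2$ non-constant monotone set-functions $f \colon 2^B \to \{0,1\}$. Note that the constant-0 and constant-1 functions are excluded, leaving exactly $M(t) - 2$ candidates, and each one is distinguishable from the others on some $S \subseteq B$. For each such $f$, Lemma~\ref{lemma:planargraph} provides a planar $t$-boundaried graph $G_f$ of pathwidth $t + \Oh(1)$ such that $f(S) = 1 \iff \s_{G_f}(S) = \optis(G_f)$. We let $\mathcal{G} := \{ G_f \mid f \text{ non-constant monotone}\}$. Because the characterization above recovers $f$ from $G_f$ up to value of $\optis$, the graphs in $\mathcal{G}$ are pairwise distinct.

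For the nonequivalence, fix a pair $G_f, G_{f'} \in \mathcal{G}$ with $f \neq f'$ and choose $S \subseteq B$ witnessing $f(S) \neq f'(S)$, without loss of generality $f(S) = 1$ and $f'(S) = 0$. I take the two indicator graphs $I_S$ and $I_B$ from Definition~\ref{def:indicator}. By Proposition~\ref{prop:glueindicator},
$$\optis(G_f \oplus I_S) - \optis(G_{f'} \oplus I_S) = \s_{G_f}(S) - \s_{G_{f'}}(S),$$
$$\optis(G_f \oplus I_B) - \optis(G_{f'} \oplus I_B) = \s_{G_f}(B) - \s_{G_{f'}}(B),$$
since the $2(t - |S|)$ and $2(t - |B|) = 0$ offsets cancel within each line.

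Now I use that both $f$ and $f'$ are non-constant and monotone, so $f(B) = f'(B) = 1$ and $f'(\emptyset) = 0$. Combined with $f(S) = 1$ and $f'(S) = 0$, Lemma~\ref{lemma:planargraph} gives $\s_{G_f}(S) = \s_{G_f}(B) = \optis(G_f)$, while $\s_{G_{f'}}(B) = \optis(G_{f'})$ and $\s_{G_{f'}}(S) \leq \optis(G_{f'}) - 1$. Subtracting yields
$$\s_{G_f}(S) - \s_{G_{f'}}(S) \;\geq\; \optis(G_f) - \optis(G_{f'}) + 1 \;>\; \s_{G_f}(B) - \s_{G_{f'}}(B),$$
which establishes condition~\ref{noneqv:planargraphs}.

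It remains to verify condition~1. The indicator graph $I_S$ is obtained from the boundary by attaching, to each $v_i \in B \setminus S$, the two pendant vertices $u_i, u'_i$ joined only to $v_i$; similarly $I_B$ adds nothing. In the drawing of $G_f = G'_\phi$ produced in the proof of Lemma~\ref{lemma:planargraph} (cf.\ Figure~\ref{fig:planarizing}), the boundary vertices are exactly the leftmost endpoints $p_{1,1}, \ldots, p_{t,1}$ of the horizontal paths, and they all lie on the outer face. Attaching two degree-one vertices to each such vertex on the outer face preserves planarity. For pathwidth, we extend the mixed search strategy of Claim~\ref{claim:phiprime:pathwidth} by starting with a single additional cleaner that sweeps each pendant pair into its boundary vertex before the main cleaning begins; this adds only $\Oh(1)$ cleaners, so $\pw(G_f \oplus I_S), \pw(G_f \oplus I_B) \leq t + \Oh(1)$, and likewise for $G_{f'}$. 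The main obstacle in this plan is precisely checking that the planar embedding given by Lemma~\ref{lemma:planargraph} places the boundary vertices on a common face; once that is observed from the drawing, the rest follows routinely.
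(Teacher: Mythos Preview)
Your proposal is correct and follows essentially the same route as the paper: construct~$\mathcal{G}$ via Lemma~\ref{lemma:planargraph} applied to each non-constant monotone~$f$, pick~$S$ with~$f(S)\neq f'(S)$, and combine Proposition~\ref{prop:glueindicator} with the defining property of Lemma~\ref{lemma:planargraph} to show the two differences disagree; the paper computes the four values of~$\optis$ individually (Claim~\ref{claim:nonequivalence:planarglued}) whereas you take differences first, but the content is identical. One small remark: your ``main obstacle'' is not actually an obstacle, since attaching degree-one pendants to \emph{any} vertex of a planar graph preserves planarity regardless of which face that vertex lies on---the paper simply notes that gluing stars onto a planar graph keeps it planar, and your outer-face observation, while true, is not needed.
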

\begin{proof}
Recall that~$M(t)$ counts the number of distinct monotone Boolean functions, which we will interpret as monotone set-functions of the form~$2^B \to \{0,1\}$ for~$B := \{p_{1,1}, p_{2,1},\ldots, \linebreak[1] p_{t-1,1}, p_{t,1}\}$. Since there is one such function that is constantly~$0$, and one that is constantly~$1$, there is a set~$\mathcal{F}$ of~$M(t)-2$ distinct non-constant monotone Boolean functions. For each function~$f \in \mathcal{F}$, apply Lemma~\ref{lemma:planargraph} to obtain a $t$-boundaried planar graph~$G_f$ with boundary~$B$ and let~$\mathcal{G}$ be the resulting set of boundaried graphs. We prove that~$\mathcal{G}$ has the two claimed properties.	

Consider two distinct boundaried graphs~$G_{f}$ and~$G_{f'}$ in~$\mathcal{G}$. Since~$f$ and~$f'$ are monotone and not constantly~$0$, we have~$f(B) = f'(B) = 1$. Since~$f$ and~$f'$ are distinct, there is a set~$S \subseteq B$ such that~$f(S) \neq f'(S)$, implying that~$S \neq B$. Consider the indicator graphs~$I_B$ and~$I_S$ as in Definition~\ref{def:indicator}. We claim that gluing either of these graphs to~$G_f$ or~$G_{f'}$ results in a planar graph of pathwidth~$t + \Oh(1)$. To see this, observe that indicator graphs are disjoint unions of stars; gluing stars onto a graph does not violate planarity. The mixed search strategy of Claim~\ref{claim:phiprime:pathwidth} can be adapted to work without extra cleaners when stars are glued onto the boundary: before starting the cleaning, place cleaners on all~$t$ boundary vertices and jump one cleaner over all leaves of the glued stars to clean the new edges. Afterward, the cleaning of the remainder of the graph proceeds as before. It remains to prove the last part of the lemma statement.

Since~$f(S) \neq f(S')$ one of these values is~$0$ and the other is~$1$; assume without loss of generality that~$f(S) = 1$ and~$f'(S) = 0$. We combine Proposition~\ref{prop:glueindicator} with the guarantees of Lemma~\ref{lemma:planargraph} to prove the following.

\begin{claim} \label{claim:nonequivalence:planarglued}
The following holds.
\begin{enumerate}
	\item $\optis(G_f \oplus I_B) = \optis(G_f)$.
	\item $\optis(G_f \oplus I_S) = \optis(G_f) + 2(t - |S|)$.
	\item $\optis(G_{f'} \oplus I_B) = \optis(G_{f'})$.
	\item $\optis(G_{f'} \oplus I_S) = \optis(G_{f'}) + 2(t - |S|) + c$ for some~$c \neq 0$.
\end{enumerate}
\end{claim}
\begin{claimproof}
By Proposition~\ref{prop:glueindicator} we have~$\optis(G_f \oplus I_B) = \s_{G_f}(B) + 2(t - |B|) = \s_{G_f}(B) = \optis(G_f)$, where the last step follows from Definition~\ref{def:subsetfunc}. Similarly, we have~$\optis(G_{f'} \oplus I_B) = \optis(G'_f)$.

Now we consider the effect of gluing~$I_S$. By Lemma~\ref{lemma:planargraph}, the fact that~$f(S) = 1$ implies that~$\s_{G_f}(S) = \optis(G_f)$. Since Proposition~\ref{prop:glueindicator} states that~$\optis(G_f \oplus I_S) = \s_{G_f}(S) + 2(t-|S|)$ this yields~$\optis(G_f \oplus I_S) = \optis(G_f) + 2(t - |S|)$. Since~$f'(S) = 0$, Lemma~\ref{lemma:planargraph} implies~$\s_{G'_f}(S) \neq \optis(G_{f'})$, so~$\s_{G'_f}(S) = \optis(G_{f'}) + c$ for some~$c \neq 0$. A final application of Proposition~\ref{prop:glueindicator} gives~$\optis(G'_f \oplus I_S) = \optis(G_{f'}) + c + 2(t - |S|)$.
\end{claimproof}

The last part of the lemma follows directly from plugging in the derived values of~$\optis$.
\end{proof}

Finally, we can combine our lower bound on the number of distinct equivalence classes of~$\equiv_{\IS,t}$ (Lemma~\ref{lemma:nonequivalent:planar}) with an upper bound on the number of small graphs to obtain a lower bound for protrusion replacement for \IndependentSet.

\begin{theorem} \label{theorem:planar:protrusion:lowerbound}
Let~$t \geq t_0$ be a sufficiently large positive integer. Let~$\mathcal{R}_t$ be a set of $t$-boundaried planar graphs such that every equivalence class of~$\equiv_{\IS,t}$ that contains a planar graph of pathwidth~$t + \Oh(1)$ is represented by some graph in~$\mathcal{R}_t$. Then~$\mathcal{R}_t$ contains a graph with~$\Omega(\log M(t)) \geq \Omega(2^t / \sqrt{4t})$ vertices.
\end{theorem}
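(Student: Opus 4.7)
The plan is to combine Lemma~\ref{lemma:nonequivalent:planar}, which supplies $M(t)-2$ pairwise inequivalent planar $t$-boundaried graphs of pathwidth $t + \Oh(1)$, with an enumeration upper bound on the number of distinct small planar $t$-boundaried graphs. First I would note that any boundary-preserving isomorphism $\phi\colon G \to H$ extends to an isomorphism $G \oplus F \to H \oplus F$ (taking $\phi$ on $V(G)$ and the identity on $V(F)$), so two boundary-isomorphic graphs are equivalent under~$\equiv_{\IS,t}$ with transposition constant~$0$. Hence each of the $M(t)-2$ equivalence classes of Lemma~\ref{lemma:nonequivalent:planar} requires a distinct representative in~$\mathcal{R}_t$, and those representatives are pairwise non-isomorphic as boundaried graphs.

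Next I would upper-bound the number $N(s,t)$ of planar $t$-boundaried graphs on at most $s$ vertices, counted up to boundary-preserving isomorphism. Since non-boundary vertices are exchangeable under isomorphism, $N(s,t) \leq \sum_{n=t}^{s} u_P(n) \cdot n(n-1)\cdots(n-t+1)$, where $u_P(n)$ denotes the number of unlabeled planar graphs on $n$ vertices. Using the standard fact that $u_P(n) = 2^{\Oh(n)}$ (e.g.\ via the known asymptotic enumeration of planar graphs, or via succinct encodings of planar graphs), this gives $N(s,t) \leq 2^{\Oh(s) + \Oh(t \log s)}$.

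Combining the two estimates via $N(s,t) \geq M(t) - 2 \geq 2^{2^t/\sqrt{4t}} - 2$ and taking logarithms yields $\Oh(s) + \Oh(t \log s) \geq 2^t/\sqrt{4t}$. For any $s$ in the target regime we have $\log s = \Oh(t)$, so $t \log s = \Oh(t^2)$, which is dominated by $2^t/\sqrt{4t}$ for all $t$ at least some $t_0$. Hence the linear term in $s$ must absorb the right-hand side, giving the desired $s = \Omega(2^t/\sqrt{4t}) = \Omega(\log M(t))$.

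The main obstacle is using a sufficiently sharp enumeration bound on planar graphs. The elementary estimate $|E(G)| \leq 3n-6$ for planar $G$ gives only $2^{\Oh(n \log n)}$ labeled planar graphs, and quotienting by the $(n-t)!$ irrelevant relabelings of non-boundary vertices still leaves a logarithmic slack in the exponent, delivering only the weaker bound $\Omega(2^t/t^{3/2})$. Invoking the $2^{\Oh(n)}$ bound on unlabeled planar graphs is what eliminates this spurious $\log$ factor and achieves the claimed $\Omega(\log M(t))$.
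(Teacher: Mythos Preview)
Your proposal is correct and follows essentially the same approach as the paper: invoke Lemma~\ref{lemma:nonequivalent:planar} for $M(t)-2$ inequivalent classes, then use the $2^{\Oh(n)}$ enumeration bound for unlabeled planar graphs (the paper cites Bonichon et al.\ for the explicit constant $2^{4.91n}$) together with a factor for choosing the labeled boundary, and compare. Your treatment is slightly more careful in two respects---you use the ordered count $n(n-1)\cdots(n-t+1)$ for the boundary labeling rather than the paper's $\binom{n}{t}$, and you explicitly note why the elementary $|E|\le 3n-6$ estimate loses a logarithmic factor---but the overall argument is the same.
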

\begin{proof}
The core idea is that there are not enough distinct small $t$-boundaried planar graphs to represent each of the~$M(t)-2$ distinct equivalence classes identified in Lemma~\ref{lemma:nonequivalent:planar} by a different small graph; hence at least one representative must be large. To make this explicit, let~$N_t(n)$ denote the number of distinct $n$-vertex $t$-boundaried planar graphs for~$n \in \mathbb{N}$. If there are fewer than $M(t)-2$ distinct $t$-boundaried planar graphs of size at most~$x$, then some representative has size at least~$x+1$. This means that, to prove the theorem, we have to show that there are constants $\alpha,t_0$, such that for every $t \geq t_0$ it holds that~$\sum_{n=t}^{\lfloor \alpha\log{M(t) \rfloor}}{N_t(n)} < M(t)-2$. (There can be no $t$-boundaried graph with fewer than~$t$ vertices.)
	
Bonichon et al.~\cite{BonichonGHPS06} showed that for all~$n \in \mathbb{N}$, there are fewer than~$2^{4.91n}$ distinct unlabeled $n$-vertex planar graphs. For our application, we need to count planar graphs that have a labeled boundary of exactly~$t$ vertices. Since there are~$\binom{n}{t}$ ways to choose a distinct boundary of size~$t$ in a given unlabeled $n$-vertex graph, we have:
$$N_n(t) < 2^{4.91n} \cdot \binom{n}{t} \leq 2^{4.91n} \cdot 2^n = 2^{5.91n}.$$ 
Recall that for~$x \in \mathbb{R}$ with~$x \neq 1$ the geometric series~$\sum _{i=0}^n x^k$ equals~$\frac{x^{n+1}-1}{x-1}$. 
\allowdisplaybreaks
\begin{align*}
	\sum_{n=t}^{\lfloor \alpha\log{M(t) \rfloor}} N_t(n) &\leq \sum_{n=t}^{\lfloor \alpha\log{M(t)} \rfloor}{2^{5.91n}}  & \mbox{Stated bounds on~$N_t(n)$.}  \\
	& < 2^{5.91 (\alpha \log M(t) + 1)}/(2^{5.91} - 1) & \mbox{Geometric series.} \\
	&= (2^{5.91} \cdot (2^{\log M(t)})^{5.91 \alpha})/(2^{5.91} - 1) & \mbox{Rewriting exponent.} \\
	&= (2^{5.91} \cdot M(t)^{5.91 \alpha})/(2^{5.91} - 1) & \mbox{$2^{\log_2 M(t)} = M(t)$.} \\
\end{align*}

It follows that for any~$\alpha < \frac{1}{5.91}$, the previous sum is bounded by~$\Oh(M(t)^{1 - \varepsilon})$ for some~$\varepsilon > 0$. For any such~$\alpha$ there therefore exists some~$t_0$ such that for all~$t \geq t_0$, the number of $t$-boundaried planar graphs with at most~$\lfloor \alpha \log M(t) \rfloor$ vertices is smaller than the number of equivalence classes that contain a planar graph of pathwidth~$t + \Oh(1)$, which is at least~$M(t) - 2$.
\end{proof}

For concrete, small values of~$t$, one can use exact values for~$M(t)$ (sequence A000372 in OEIS) and the number of unlabeled planar graphs (sequence A005470 in OEIS) to obtain explicit lower bounds from the counting argument in the proof of Theorem~\ref{theorem:planar:protrusion:lowerbound}. For example, for a boundary of size six we have~$M(6) = 7828354$. To have at least~$M(6) - 2$ distinct $t$-boundaried planar graphs, one requires $t$-boundaried planar graphs on at least ten vertices.

\section{Extending the lower bound to Dominating Set} \label{sec:domset}

In this section we use a folklore reduction from \VertexCover (the dual of \IndependentSet) to \DominatingSet, to obtain lower bounds for the latter problem. For a graph~$G$, let~$\ett(G)$ denote the graph obtained from~$G$ as follows: for each edge~$\{u,v\} \in E(G)$, insert a new vertex~$x_{\{u,v\}}$ that is adjacent to~$u$ and~$v$. In this way, for every edge of~$G$ we introduce a triangle with a new private vertex into~$G'$. We extend the definition of~$\ett$ to $t$-boundaried graphs; the set of boundary vertices remains unchanged by the transformation.

\begin{observation} \label{obs:ett:planar}
If~$G$ is planar, then~$\ett(G)$ is planar: the added degree-two vertices can be drawn alongside the edge for which they are inserted.
\end{observation}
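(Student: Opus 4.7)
The plan is to start from an arbitrary planar embedding of $G$ and modify it locally at each edge to obtain a planar embedding of $\ett(G)$. For each edge $e=\{u,v\}$ of $G$, drawn as a Jordan arc $\gamma_e$ between the points representing $u$ and $v$, I would pick a small tubular neighborhood $U_e$ of the interior of $\gamma_e$, place the new vertex $x_{\{u,v\}}$ at a point inside $U_e$ slightly off $\gamma_e$ (in one of the two faces incident to $e$), and draw the two new edges $\{u,x_{\{u,v\}}\}$ and $\{v,x_{\{u,v\}}\}$ as short arcs inside $U_e$ that hug $\gamma_e$ on the same side.

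The main step is choosing the neighborhoods $U_e$ carefully. Because the planar drawing of $G$ uses only finitely many vertices and finitely many pairwise-noncrossing edges, each interior point of $\gamma_e$ has positive distance to the closed set consisting of all vertices other than $u,v$ and all edges other than $e$. A standard compactness/continuity argument then lets me shrink each $U_e$ so that (i) $U_e$ avoids every vertex and every edge of $G$ other than $e$ itself, and (ii) the neighborhoods $U_e$ are pairwise disjoint except possibly at shared endpoints, which are vertices of $G$ and not used as interior points of any $U_e$.

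Inside each $U_e$ the added triangle on $\{u,v,x_{\{u,v\}}\}$ is drawn without self-crossings, and because the $U_e$'s are pairwise disjoint in their interiors, the additions made for different edges do not interfere. Hence the resulting drawing of $\ett(G)$ is plane, proving planarity. I do not anticipate any real obstacle; this is a routine local modification of a planar embedding, and the parenthetical hint in the statement already pinpoints the correct picture.
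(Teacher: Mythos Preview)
Your proposal is correct and follows exactly the approach the paper indicates: the paper treats this observation as self-evident and gives no separate proof beyond the parenthetical hint in the statement itself, which is precisely the local-embedding argument you spell out in detail.
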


\begin{observation} \label{obs:glue:ett}
If~$G_1$ and~$G_2$ are $t$-boundaried graphs in which the boundary is an independent set, then~$\ett(G_1 \oplus G_2) = \ett(G_1) \oplus \ett(G_2)$.
\end{observation}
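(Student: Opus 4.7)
The plan is to verify equality as labeled $t$-boundaried graphs by matching vertex sets and edge sets on both sides. The first step is to use the hypothesis that the boundary is independent in each $G_i$ to control the gluing operation: parallel edges can only be created in $G_1 \oplus G_2$ between two identified vertices, i.e. between two boundary vertices, but no such edge exists in either graph. Hence $E(G_1 \oplus G_2)$ is in canonical bijection with the disjoint union $E(G_1) \sqcup E(G_2)$, and in particular every edge of $G_1 \oplus G_2$ has at least one endpoint outside the boundary and belongs to exactly one of $G_1, G_2$.

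Next I would compare the vertex sets. Since $\ett$ leaves the boundary labeling unchanged, each $\ett(G_i)$ is a $t$-boundaried graph whose non-boundary vertices are those of $G_i$ together with one fresh subdivision vertex $x_e$ for each $e \in E(G_i)$. The subdivision vertices are pairwise distinct across the two graphs (they are indexed by edges, and the edge sets are disjoint by the previous step), and they are not boundary vertices, so the gluing $\ett(G_1) \oplus \ett(G_2)$ performs no identification on them. Its vertex set therefore equals $V(G_1 \oplus G_2) \cup \{x_e : e \in E(G_1) \sqcup E(G_2)\}$, which by the bijection between $E(G_1) \sqcup E(G_2)$ and $E(G_1 \oplus G_2)$ matches the vertex set of $\ett(G_1 \oplus G_2)$.

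Finally I would check that the edge sets agree under this identification. Every edge of $\ett(G_1) \oplus \ett(G_2)$ is either (a) an edge inherited from an original $G_i$ — and since boundaries are independent these assemble without collapse into exactly $E(G_1 \oplus G_2)$ — or (b) an edge of the form $\{u, x_e\}$ or $\{v, x_e\}$ for an edge $e = \{u,v\} \in E(G_i)$, which is precisely the contribution that $\ett$ makes to $\ett(G_1 \oplus G_2)$ for the corresponding edge of $G_1 \oplus G_2$. The only issue that could spoil the equality would be a parallel-edge collapse during gluing on the right-hand side, but since each $x_e$ is internal to a single $\ett(G_i)$, no parallel edges involving $x_e$ can arise. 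I do not expect any deep obstacle; the only subtlety is bookkeeping the identifications carefully and invoking the independence of the boundary at the one place where a parallel edge could otherwise appear.
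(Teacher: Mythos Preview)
Your proposal is correct. The paper states this as an observation without proof, treating it as self-evident; your careful bookkeeping of vertex and edge sets is exactly the right verification, and you correctly identify the one place where the independence hypothesis is needed---namely, to ensure that $E(G_1 \oplus G_2)$ is the disjoint union of $E(G_1)$ and $E(G_2)$ so that the subdivision vertices $x_e$ match up bijectively on both sides.
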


\begin{proposition} \label{prop:ett:pathwidth}
For any graph~$G$ we have~$\pw(\ett(G)) \leq \pw(G) + 1$.
\end{proposition}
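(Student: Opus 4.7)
Starting from an optimal path decomposition $(X_1, \ldots, X_r)$ of $G$ of width $\pw(G)$, I will insert one additional bag per edge to accommodate the newly introduced subdivision vertices of $\ett(G)$, paying only one extra vertex per new bag.

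\textbf{Construction.} For each edge $e = \{u,v\} \in E(G)$, by the edge property of the path decomposition there is some index $i(e)$ such that $\{u,v\} \subseteq X_{i(e)}$; fix such an index for every edge. Build a new sequence of bags by walking through $(X_1, \ldots, X_r)$ and, immediately after each bag $X_i$, inserting one new bag $X_i \cup \{x_e\}$ for every edge $e$ with $i(e) = i$ (in any order). Call the resulting sequence $\mathcal{P}'$.

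\textbf{Verification.} I would check the three properties of a path decomposition for $\mathcal{P}'$ as a decomposition of $\ett(G)$. Every vertex of $G$ still appears in some bag (the original bags are kept), and every new vertex $x_e$ appears in the corresponding inserted bag. Every edge of $G$ survives in the original bag $X_{i(e)}$, and each of the two new edges $\{u, x_e\}, \{v, x_e\}$ is contained in the inserted bag $X_{i(e)} \cup \{x_e\}$, which contains all three of $u, v, x_e$. For the connectivity condition: original vertices in $X_{i(e)}$ have their range extended by contiguous inserted bags that are supersets of $X_{i(e)}$, so they still occupy a contiguous interval; original vertices not in $X_{i(e)}$ are unaffected by the inserted bag; and each new vertex $x_e$ appears in exactly one bag, trivially contiguous.

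\textbf{Width bound.} Each original bag has size at most $\pw(G)+1$, and each inserted bag has size at most $\pw(G)+2$, yielding width at most $\pw(G)+1$. The argument is entirely syntactic; I do not foresee a genuine obstacle, the only care being to insert a separate bag per edge so that distinct $x_e$'s are never forced into the same bag, which is what keeps the blow-up to a single additive unit rather than something proportional to the degree.
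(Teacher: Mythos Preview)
Your proof is correct and follows essentially the same approach as the paper: both arguments create room in the path decomposition by inserting copies of existing bags and then place each new triangle vertex alone into its own copy, so that bag sizes grow by at most one. The only cosmetic difference is that the paper duplicates every bag $|X|$ times up front and then assigns a distinct copy to each $x_e$, whereas you insert exactly one extra bag per edge right after the relevant $X_{i(e)}$; this is the same idea with slightly less waste.
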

\begin{proof}
To prove that the pathwidth increases by at most one (which is unavoidable in some cases), we use the definition of pathwidth in terms of path decompositions. Let~$\mathcal{P} = X_1, \ldots, X_\ell$ be a sequence of subsets of~$V(G)$ called \emph{bags} that form a path decomposition of~$G$: each vertex occurs in a contiguous interval of bags, each vertex occurs in at least one bag, and for each edge there is a bag containing both its endpoints. The \emph{width} of the decomposition is the size of the largest bag minus one. 

Let~$X$ be the vertices of~$\ett(G)$ that do not occur in~$G$, which were inserted to complete edges into triangles. To obtain a path decomposition of~$\ett(G)$, start by replacing each bag of~$\mathcal{P}$ by~$|X|$ consecutive copies of that same bag; this does not increase the width. For each vertex~$x_{\{u,v\}} \in X$ there is an edge~$\{u,v\} \in E(G)$, and therefore~$\mathcal{P}$ has a bag containing both~$u$ and~$v$. Since we duplicated each bag~$|X|$ times, for each vertex~$x_{\{u,v\}}$ in~$X$ we can find a distinct bag containing both its endpoints. Insert each vertex of~$X$ into its associated bag; it is easy to verify that we obtain a path decomposition of~$\ett(G)$. Since the maximum size of a bag increases by exactly one, the width of the new decomposition is one larger than the width of~$\mathcal{P}$. Since~$G$ has a decomposition of pathwidth~$\pw(G)$ by definition, this concludes the proof.
\end{proof}

The following proposition is folklore; it is the core of the classic reduction from \VertexCover to \DominatingSet.

\begin{proposition} \label{prop:vc:to:ds}
Let~$G$ be a graph without isolated vertices. The following holds:~$\optvc(G) = \optds(\ett(G))$.
\end{proposition}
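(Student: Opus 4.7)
The plan is to prove the equality by establishing the two inequalities separately, using the structural observation that in $\ett(G)$ each new vertex $x_{\{u,v\}}$ has exactly $u$ and $v$ as neighbors, while each original vertex $v \in V(G)$ gains the new neighbors $\{x_{\{u,v\}} \mid u \in N_G(v)\}$. In particular, the closed neighborhood of $x_{\{u,v\}}$ in $\ett(G)$ is strictly contained in the closed neighborhood of $u$ (and of $v$).

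First I would show $\optds(\ett(G)) \leq \optvc(G)$. Given a minimum vertex cover $C$ of $G$, I would argue that $C$ is a dominating set of $\ett(G)$. Each added vertex $x_{\{u,v\}}$ is dominated because $C$ covers the edge $\{u,v\}$, so at least one of its two neighbors lies in $C$. Each original vertex $v \notin C$ has some neighbor $u$ in $G$ (here the hypothesis that $G$ has no isolated vertices is used), and since $\{u,v\}$ is an edge of $G$, the cover property forces $u \in C$, so $v$ is dominated as well.

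For the reverse inequality, I would start with a minimum dominating set $D$ of $\ett(G)$ and transform it into a dominating set $D' \subseteq V(G)$ of size at most $|D|$. The transformation replaces any $x_{\{u,v\}} \in D$ by $u$ (or removes it if $u$ is already in $D$). The key observation is that the closed neighborhood of $u$ in $\ett(G)$ contains that of $x_{\{u,v\}}$, so this swap never breaks domination of any vertex and never increases the size of the set. Iterating, I obtain $D' \subseteq V(G)$ with $|D'| \leq |D|$ that dominates $\ett(G)$. Finally, since $D'$ dominates every $x_{\{u,v\}}$ and the only neighbors of $x_{\{u,v\}}$ are $u$ and $v$, at least one of $u,v$ lies in $D'$, so $D'$ is a vertex cover of $G$, giving $\optvc(G) \leq |D'| \leq \optds(\ett(G))$.

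The only delicate point, and the main obstacle to watch out for, is the ``no isolated vertices'' hypothesis: it is needed precisely in the first direction to guarantee that an uncovered vertex actually has a neighbor in $G$ that can dominate it through the cover. Without this, a vertex cover need not dominate isolated vertices. The replacement step in the second direction works without any such hypothesis because the neighborhood-containment argument is purely local.
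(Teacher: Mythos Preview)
Your proposal is correct and follows essentially the same approach as the paper: show that a vertex cover of $G$ dominates $\ett(G)$ (using the no-isolated-vertices hypothesis for original vertices), and conversely normalize a minimum dominating set of $\ett(G)$ to lie inside $V(G)$ via the closed-neighborhood containment $N_{\ett(G)}[x_{\{u,v\}}] \subseteq N_{\ett(G)}[u]$, then read off a vertex cover. One tiny slip: the containment need not be \emph{strict} (e.g.\ when $u$ has degree one in $G$), but only the non-strict inclusion is used, so the argument is unaffected.
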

\begin{proof}
If~$S$ is a vertex cover in~$G$, then~$S$ is also a dominating set in~$\ett(G)$. To see that, note that~$S$ dominates all vertices of the form~$x_{\{u,v\}}$ that were inserted into~$\ett(G)$ on account of an edge~$\{u,v\} \in E(G)$, since~$S$ includes at least one of~$u$ and~$v$ to cover that edge. To see that~$S$ also dominates all original vertices, let~$v \in V(G) \cap V(\ett(G))$ be an arbitrary vertex and let~$u \in V(G)$ be an arbitrary neighbor of~$v$, which exists since~$G$ has no isolated vertices. Then~$S$ contains one of~$u$ and~$v$ to cover edge~$\{u,v\}$ and therefore dominates~$v$. It follows that~$S$ is a dominating set in~$\ett(G)$, implying that~$\optvc(G) \geq \optds(\ett(G))$.

For the other direction, consider a dominating set~$S$ in~$\ett(G)$. We may assume without loss of generality that~$S$ contains no vertices of the form~$x_{\{u,v\}}$: since the closed neighborhood of~$x_{\{u,v\}}$ is a subset of the closed neighborhoods of~$u$ and~$v$, an occurrence of a vertex~$x_{\{u,v\}}$ in~$S$ may be replaced by either~$u$ or~$v$ to obtain a new dominating set that is not larger. Hence~$\ett(G)$ has a minimum dominating set~$S$ consisting only of original vertices from~$V(G)$. For each edge~$\{u,v\} \in E(G)$, the set~$S$ contains one of~$u$ or~$v$ to dominate the vertex~$x_{\{u,v\}}$. Hence~$S$ is a vertex cover in~$G$, showing that~$\optvc(G) \leq \optds(\ett(G))$. This establishes equality and concludes the proof.
\end{proof}

Proposition~\ref{prop:vc:to:ds} allows us to transform the set of nonequivalent graphs for~$\equiv_\IS$ constructed in Lemma~\ref{lemma:nonequivalent:planar}, into a set of nonequivalent graphs for~$\equiv_\DS$. 

\begin{lemma} \label{lemma:nonequiv:ds}
For each positive integer~$t$ there is a set~$\mathcal{G}_\DS$ of~$M(t) - 2$ distinct $t$-boundaried planar graphs of pathwidth~$t + \Oh(1)$, such that for each pair of distinct graphs~$G_{\DS,1},G_{\DS,2} \in \mathcal{G}_\DS$ there are two indicator graphs~$I_1$ and~$I_2$ as in Definition~\ref{def:indicator} such that:
\begin{enumerate}
	\item For all~$i,j \in \{1,2\}$ the graph~$G_{\DS,i} \oplus \ett(I_j)$ is planar and has pathwidth~$t + \Oh(1)$.\label{pty:ds:glue:planar:pw}
	\item $\optds(G_{\DS,1} \oplus \ett(I_1)) - \optds(G_{\DS,2} \oplus \ett(I_1)) \neq \optds(G_{\DS,1} \oplus \ett(I_2)) - \optds(G_{\DS,2} \oplus \ett(I_2))$.\label{pty:ds:nonequiv}
\end{enumerate}
\end{lemma}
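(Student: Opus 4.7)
The plan is to transfer the nonequivalent graphs constructed for \IndependentSet in Lemma~\ref{lemma:nonequivalent:planar} to \DominatingSet via the $\ett$ transformation. Let $\mathcal{G} = \{G_f\}$ be the family of $M(t)-2$ nonequivalent $t$-boundaried planar graphs from Lemma~\ref{lemma:nonequivalent:planar}, and set $\mathcal{G}_\DS := \{\ett(G_f) \mid G_f \in \mathcal{G}\}$, inheriting the boundary labeling from each $G_f$. Distinctness of $\mathcal{G}_\DS$ is immediate since $\ett$ is injective (one can recover $G$ from $\ett(G)$ by discarding the degree-two vertices that complete triangles). For each pair $G_{\DS,1} = \ett(G_1)$ and $G_{\DS,2} = \ett(G_2)$ I would reuse the very indicators $I_S, I_B$ produced by Lemma~\ref{lemma:nonequivalent:planar} for $(G_1, G_2)$, setting $I_1 := I_S$ and $I_2 := I_B$.

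The boundary of each $G_f$ (the first vertices of the $t$ disjoint paths in the construction of Lemma~\ref{lemma:planargraph}) and of each indicator $I_j$ (by Definition~\ref{def:indicator}) is an independent set, so Observation~\ref{obs:glue:ett} yields the key identity
\[
G_{\DS,f} \oplus \ett(I_j) \;=\; \ett(G_f) \oplus \ett(I_j) \;=\; \ett(G_f \oplus I_j).
\]
Property~\ref{pty:ds:glue:planar:pw} then follows at once: planarity via Observation~\ref{obs:ett:planar} applied to the planar graph $G_f \oplus I_j$, and the bound $\pw \leq t + \Oh(1)$ via Proposition~\ref{prop:ett:pathwidth} applied to the pathwidth-$(t+\Oh(1))$ graph $G_f \oplus I_j$.

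For property~\ref{pty:ds:nonequiv}, observe that $G_f \oplus I_j$ has no isolated vertices: each boundary vertex lies on a path inside $G_f$, and every other vertex of $G_f$ and $I_j$ is constructed together with neighbors. Hence Proposition~\ref{prop:vc:to:ds} combined with the classical identity $\optvc(H) = |V(H)| - \optis(H)$ gives
\[
\optds(G_{\DS,f} \oplus \ett(I_j)) \;=\; \optvc(G_f \oplus I_j) \;=\; |V(G_f \oplus I_j)| - \optis(G_f \oplus I_j).
\]
Since $|V(G_i \oplus I_j)| = |V(G_i)| + |V(I_j)| - t$, the difference $|V(G_1 \oplus I_j)| - |V(G_2 \oplus I_j)| = |V(G_1)| - |V(G_2)|$ is independent of $j$. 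Substituting into the double difference in~\ref{pty:ds:nonequiv}, the $|V|$-terms cancel and one is left with the negation of the corresponding double difference for \optis furnished by Lemma~\ref{lemma:nonequivalent:planar}, which is nonzero.

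The main subtlety, and the only genuine obstacle, is verifying the structural preconditions so that Observation~\ref{obs:glue:ett} and Proposition~\ref{prop:vc:to:ds} apply simultaneously: one must check that in every $G_f$ the boundary is independent and no vertex is isolated, and likewise for each indicator $I_j$. Both facts follow from a brief inspection of Definition~\ref{def:indicator} and the construction in Lemma~\ref{lemma:planargraph}; everything else is an algebraic consequence of Lemma~\ref{lemma:nonequivalent:planar}.
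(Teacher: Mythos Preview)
Your proposal is correct and follows essentially the same approach as the paper: define $\mathcal{G}_\DS$ by applying $\ett$ to the graphs of Lemma~\ref{lemma:nonequivalent:planar}, use Observation~\ref{obs:glue:ett} (valid because both boundaries are independent) to rewrite $G_{\DS,i}\oplus\ett(I_j)$ as $\ett(G_i\oplus I_j)$, then derive planarity and pathwidth from Observations~\ref{obs:ett:planar} and Proposition~\ref{prop:ett:pathwidth}, and finally obtain Property~\ref{pty:ds:nonequiv} by combining Proposition~\ref{prop:vc:to:ds} with the $\optvc=|V|-\optis$ identity and the cancellation of vertex-count terms in the double difference. Your identification of the two structural checks (independent boundaries, no isolated vertices after gluing) as the only nontrivial points matches the paper exactly.
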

\begin{proof}
Fix a boundary size~$t$ and let~$\mathcal{G}_{\IS}$ be the set of $t$-boundaried planar graphs constructed in Lemma~\ref{lemma:nonequivalent:planar}, which are pairwise nonequivalent under~$\equiv_{\IS,t}$. Observe that the construction of Lemma~\ref{lemma:planargraph} ensures that in all graphs in~$\mathcal{G}_\IS$, the boundary forms an independent set. Let~$\mathcal{G}_{\DS} := \{ \ett(G) \mid G \in \mathcal{G}_{\IS}\}$. By Observation~\ref{obs:ett:planar}, all graphs in~$\mathcal{G}_{\DS}$ are planar. Let~$G_{\DS,1}$ and~$G_{\DS,2}$ be distinct $t$-boundaried graphs in~$\mathcal{G}_{\DS}$. By definition of~$\mathcal{G}_{\DS}$, there exist distinct~$G_{\IS,1}$ and~$G_{\IS,2} \in \mathcal{G}_\IS$ such that~$G_{\DS,i} = \ett(G_{\IS,i})$ for~$i \in \{1,2\}$. By Property~\ref{noneqv:planargraphs} of Lemma~\ref{lemma:nonequivalent:planar}, there exist indicator graphs~$I_1$ and~$I_2$ such that
\begin{equation} \label{eq:is:notequivalent}
\optis(G_{\IS,1} \oplus I_1) - \optis(G_{\IS,2} \oplus I_1) \neq \optis(G_{\IS,1} \oplus I_2) - \optis(G_{\IS,2} \oplus I_2).
\end{equation}
We show that~$G_{\DS_1}$ and~$G_{\DS_2}$ satisfy the two claimed conditions with~$I_1$ and~$I_2$.

\textbf{(\ref{pty:ds:glue:planar:pw})} To see that the graphs~$G_{\DS,i} \oplus \ett(I_j)$ are planar for all choices of~$i$ and~$j$, observe that the boundaries of the two graphs being glued are independent sets. By Observation~\ref{obs:glue:ett} we therefore have~$G_{\DS,i} \oplus \ett(I_j) = \ett(G_{\IS,i}) \oplus \ett(I_j) = \ett(G_{\IS,i} \oplus I_j)$, which is planar by Observation~\ref{obs:ett:planar} since Lemma~\ref{lemma:nonequivalent:planar} guarantees that~$G_{\IS,i} \oplus I_j$ is planar. The same lemma guarantees that~$\pw(G_{\IS,i} \oplus I_j) = \pw(G_{\IS,i})$. Since the~$\ett$ operation increases the pathwidth by at most one (Proposition~\ref{prop:ett:pathwidth}), it follows that~$\pw(G_{\DS,i} \oplus \ett(I_j)) = \pw(\ett(G_{\IS,i} \oplus I_j)) \leq \pw(G_{\IS,i}) + 1 \leq t + \Oh(1)$, where the last step follows from the guarantee of Lemma~\ref{lemma:nonequivalent:planar}.

\textbf{(\ref{pty:ds:nonequiv})} Let~$n_i$ denote the number of vertices in~$G_{\IS,i}$ for~$i \in \{1,2\}$, and let~$m_j$ denote the number of vertices in~$I_j$ for~$j \in \{1,2\}$. Since two boundaries of size~$t$ are identified into a single size-$t$ boundary when gluing, we have~$|V(G_{\IS, i} \oplus I_j)| = n_i + m_j - t$ for all~$i,j \in \{1,2\}$. Because the complement of a maximum independent set is a minimum vertex cover, we have~$\optis(G_{\IS,i} \oplus I_j) = |V(G_{\IS, i} \oplus I_j)| - \optvc(G_{\IS, i} \oplus I_j) = n_i + m_j - t - \optvc(G_{\IS, i} \oplus I_j)$ for all~$i,j \in \{1,2\}$. Substituting these expressions into Equation~\ref{eq:is:notequivalent}, we get:
\begin{align*} \label{eq:is:notequivalent}
& (n_1 + m_1 - t - \optvc(G_{\IS,1} \oplus I_1)) - (n_2 + m_1 - t - \optvc(G_{\IS,2} \oplus I_1)) \\
\neq &(n_1 + m_2 - t - \optvc(G_{\IS,1} \oplus I_2)) - (n_2 + m_2 - t - \optvc(G_{\IS,2} \oplus I_2))
\end{align*}
After canceling terms, this simplifies to:
\begin{equation} \label{eq:vc:notequivalent}
\optvc(G_{\IS,1} \oplus I_1) - \optvc(G_{\IS,2} \oplus I_1) \neq \optvc(G_{\IS,1} \oplus I_2) - \optvc(G_{\IS,2} \oplus I_2)
\end{equation}
By Observation~\ref{obs:glue:ett}, we have~$\ett(G_{\IS,i} \oplus I_j) = \ett(G_{\IS,i}) \oplus \ett(I_j)$ for~$i,j \in \{1,2\}$, since the boundary forms an independent set in all graphs in~$\mathcal{G}_{\IS}$ and in all indicator graphs. We claim that the graphs~$G_{\IS,i} \oplus I_j$ for~$i,j \in \{1,2\}$ do not contain isolated vertices. To see this, note that the only vertices that can potentially be isolated in an indicator graph are its boundary vertices. The graphs~$G_{\IS,i}$ originate from the construction of Lemma~\ref{lemma:planargraph}, which is easily seen not to produce isolated vertices. Gluing a graph~$G_{\IS,i}$ onto an indicator graph therefore increases the degree of all boundary vertices to one or more, ensuring that~$G_{\IS,i} \oplus I_j$ does not have isolated vertices. We may therefore invoke Proposition~\ref{prop:vc:to:ds} on such graphs:
\begin{align*}
\optvc(G_{\IS,i} \oplus I_j) &= \optds(\ett(G_{\IS,i} \oplus I_j)) & \mbox{Proposition~\ref{prop:vc:to:ds}.} \\
&= \optds(\ett(G_{\IS,i}) \oplus \ett(I_j)) & \mbox{Observation~\ref{obs:glue:ett}.} \\
&= \optds(G_{\DS,i} \oplus \ett(I_j)) & \mbox{Definition of~$G_{\DS,i}$.}
\end{align*}
Substituting these expressions for~$\optvc$ into Equation~\ref{eq:vc:notequivalent}, we obtain:
\begin{align*}
& \optds(G_{\DS,1} \oplus \ett(I_1)) - \optds(G_{\DS,2} \oplus \ett(I_1)) \\
\neq& \optds(G_{\DS,1} \oplus \ett(I_2)) - \optds(G_{\DS,2} \oplus \ett(I_2))
\end{align*}
This shows that~$\mathcal{G}_\DS$ has all claimed properties and concludes the proof of Lemma~\ref{lemma:nonequiv:ds}.
\end{proof}

The counting argument of Theorem~\ref{theorem:planar:protrusion:lowerbound} that turns the lower bound on the number of equivalence classes of~$\equiv_\IS$ into a lower bound on the critical size of sets of planar representatives can be used without modifications to establish the same lower bound for \DominatingSet.

\begin{corollary}
Let~$t \geq t_0$ be a sufficiently large positive integer. Let~$\mathcal{R}_t$ be a set of $t$-boundaried planar graphs such that every equivalence class of~$\equiv_{\DS,t}$ that contains a planar graph of pathwidth~$t + \Oh(1)$ is represented by some graph in~$\mathcal{R}_t$. Then~$\mathcal{R}_t$ contains a graph with~$\Omega(\log M(t)) \geq \Omega(2^t / \sqrt{4t})$ vertices.
\end{corollary}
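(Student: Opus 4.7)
The plan is to mimic the counting argument of Theorem~\ref{theorem:planar:protrusion:lowerbound} verbatim, with \IS replaced by \DS and with the set of nonequivalent graphs supplied by Lemma~\ref{lemma:nonequiv:ds} instead of Lemma~\ref{lemma:nonequivalent:planar}. First I would invoke Lemma~\ref{lemma:nonequiv:ds} to obtain the set~$\mathcal{G}_\DS$ of~$M(t)-2$ distinct $t$-boundaried planar graphs of pathwidth~$t+\Oh(1)$. By Property~\ref{pty:ds:nonequiv} of that lemma, for any two distinct~$G_{\DS,1},G_{\DS,2}\in\mathcal{G}_\DS$ there exist indicator graphs~$I_1,I_2$ such that no constant~$\Delta$ can satisfy $\optds(G_{\DS,1}\oplus F)=\optds(G_{\DS,2}\oplus F)+\Delta$ simultaneously for~$F\in\{\ett(I_1),\ett(I_2)\}$, so~$G_{\DS,1}\not\equiv_{\DS,t} G_{\DS,2}$. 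Hence the graphs of~$\mathcal{G}_\DS$ lie in~$M(t)-2$ pairwise distinct equivalence classes of~$\equiv_{\DS,t}$, each of which contains a planar graph of pathwidth~$t+\Oh(1)$ (namely the graph from~$\mathcal{G}_\DS$ itself, after verifying that the gluing witnesses used in Property~\ref{pty:ds:glue:planar:pw} keep us inside the required class of graphs).

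Next I would use the hypothesis on~$\mathcal{R}_t$: each of these~$M(t)-2$ equivalence classes must be represented by a distinct planar $t$-boundaried graph in~$\mathcal{R}_t$, since two graphs representing the same class would be~$\equiv_{\DS,t}$-equivalent, contradicting the pairwise nonequivalence established above. Consequently,~$\mathcal{R}_t$ contains at least~$M(t)-2$ distinct $t$-boundaried planar graphs.

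Now I would run the same counting estimate as in the proof of Theorem~\ref{theorem:planar:protrusion:lowerbound}. Letting~$N_t(n)$ denote the number of distinct $n$-vertex $t$-boundaried planar graphs, the Bonichon et al.\ bound together with the~$\binom{n}{t}\leq 2^n$ bound for labeling a boundary gives~$N_t(n)<2^{5.91n}$. Summing the geometric series up to~$n=\lfloor\alpha\log M(t)\rfloor$ yields a total of~$\Oh(M(t)^{5.91\alpha})$, which is~$o(M(t))$ for any~$\alpha<1/5.91$. Thus for all sufficiently large~$t\geq t_0$, the number of $t$-boundaried planar graphs on at most~$\lfloor\alpha\log M(t)\rfloor$ vertices is strictly less than~$M(t)-2$, so~$\mathcal{R}_t$ must contain a graph of size at least~$\lfloor\alpha\log M(t)\rfloor+1=\Omega(\log M(t))$. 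Combining with~$M(t)\geq 2^{2^t/\sqrt{4t}}$ gives the claimed~$\Omega(2^t/\sqrt{4t})$ lower bound.

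The only genuinely new content beyond the \IndependentSet argument is the verification that the witnesses produced by Lemma~\ref{lemma:nonequiv:ds} actually certify~$\equiv_{\DS,t}$-nonequivalence via the definition in~(\ref{eq:equivalence}); this is immediate because Property~\ref{pty:ds:nonequiv} rules out any single transposition constant~$\Delta$ that works for both glued graphs~$\ett(I_1)$ and~$\ett(I_2)$. Everything else is routine reuse of the counting mechanism from Theorem~\ref{theorem:planar:protrusion:lowerbound}, so there is no real obstacle; the main care needed is simply to ensure that the small-pathwidth and planarity guarantees of Lemma~\ref{lemma:nonequiv:ds} place the equivalence classes we constructed inside the family of classes that~$\mathcal{R}_t$ is required to represent.
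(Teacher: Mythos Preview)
Your proposal is correct and matches the paper's approach exactly: the paper states the corollary with only the remark that the counting argument of Theorem~\ref{theorem:planar:protrusion:lowerbound} ``can be used without modifications'' once Lemma~\ref{lemma:nonequiv:ds} supplies the~$M(t)-2$ pairwise $\equiv_{\DS,t}$-nonequivalent planar graphs of pathwidth~$t+\Oh(1)$, and you have simply spelled out what that reuse looks like. One minor remark: you do not need Property~\ref{pty:ds:glue:planar:pw} to place the equivalence classes inside the family that~$\mathcal{R}_t$ must represent, since the graphs of~$\mathcal{G}_\DS$ themselves are already planar of pathwidth~$t+\Oh(1)$ by the first sentence of Lemma~\ref{lemma:nonequiv:ds}.
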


To conclude the section, we comment on how the remainder of the theory we developed for~$\equiv_\IS$ can be mirrored for~$\equiv_\DS$. For \IndependentSet, there are only two natural states for a boundary vertex: it may be included in the independent set, or it may not. For \DominatingSet, there are three natural states per vertex: (1) it is included in the dominating set, (2) it is not included but has to be dominated from the current $t$-boundaried graph, or (3) it is not included in the dominating set and will be dominated from the graph that is glued onto the current graph. Based on this difference, we expect an upper bound of~$2^{3^t}$ on the number of equivalence classes for~$\equiv_\DS$, compared to roughly~$2^{2^t}$ for~$\equiv_\IS$. We leave further \DominatingSet-analogues of our results, such as a characterization of the equivalence classes of~$\equiv_\DS$ by restricted monotone functions from~$\{0,1,2\}^t$ to~$\{0,1\}$, for future work.

\section{Conclusion} \label{sec:conclusion}
We presented lower and upper bounds on the number of equivalence classes of the canonical equivalence relation~$\equiv_{\IS,t}$ for \IndependentSet on $t$-boundaried graphs. We combined these lower bounds with upper bounds on the number of small graphs to give lower bounds for the critical sizes of sets of representatives for \IndependentSet and \DominatingSet. For a set of \emph{planar} representatives that represent all equivalence classes containing a bounded-pathwidth planar graph, we gave a lower bound of~$\Omega(\log M(t)) \geq \Omega(2^t / \sqrt{4t})$ on the critical size. The same argumentation can also be used to obtain lower bounds on the critical size of sets of potentially \emph{nonplanar} representatives. The number of distinct $t$-boundaried (unrestrained) graphs is at most~$2^{\binom{n}{2}} \cdot \binom{n}{t} \leq 2^{n^2 / 2}$. Using this bound in the proof of Theorem~\ref{theorem:planar:protrusion:lowerbound} yields a lower bound of~$\Omega(\sqrt{\log M(t)}) \geq \Omega(2^{t/2}/ \sqrt[4]{4t})$ on the critical size of a set of representatives that contains at least~$M(t) - 2$ distinct graphs.

In their work, Garnero et al.~\cite{GarneroPST15} (roughly) show that each equivalence class of~$\equiv_{\IS,t}$ containing a planar graph of treewidth at most~$t$ can be represented by a planar graph with~$2^{(t+1)^{2^t}}$ vertices and treewidth at most~$t$. Our lower bound shows that to represent all equivalence classes containing a planar graph of \emph{pathwidth}~$t + \Oh(1)$ (a subset of the graphs of treewidth~$t + \Oh(1)$), requires a graph with~$\Omega(2^t / \sqrt{4t})$ vertices. Our single-exponential lower bound is very far from the triple-exponential upper bound. However, we believe that the correct bound is single-exponential. Since Corollary~\ref{cor:characteristics} shows that each equivalence class is completely characterized by its normalized boundary function, and the construction of Lemma~\ref{lemma:generalgraphs} produces a boundaried graph with~$2^{\Oh(t)}$ vertices for any given boundary function, it follows that every equivalence class of~$\equiv_{\IS,t}$ has a representative with~$2^{\Oh(t)}$ vertices. Note, however, that the representatives constructed in this way are nonplanar and have pathwidth and treewidth~$2^{\Theta(t)}$.

The main conceptual contribution of this work is the fact that nontrivial lower bounds can be obtained by counting equivalence classes. The fact that a significant portion of the equivalence classes (at least~$M(t) \geq 2^{2^t / \sqrt{4t}}$ out of the total of at most~$2^{2^t}$) can be generated from monotone Boolean functions was useful in the construction of nonequivalent planar graphs of bounded pathwidth. 
We showed that the lower bound construction of Lokshtanov et al.~\cite{LokshtanovMS11a} can be planarized while increasing the pathwidth by an additive constant. The planarization argument employed here can also be used to strengthen the SETH-based runtime lower bound of~$\Omega((2-\varepsilon)^w \cdot n^{\Oh(1)})$ for solving \IndependentSet on graphs of treewidth~$w$, to \emph{planar} graphs of treewidth~$w$. Not all bounded-pathwidth graphs can be planarized with a bounded increase in pathwidth. In particular, when planarizing~$K_{3,n}$ for sufficiently large~$n$ the pathwidth grows arbitrarily large~\cite{Eppstein16}. 

\subparagraph*{Acknowledgments.} We are grateful to Daniel Lokshtanov and David Eppstein for insightful discussions regarding planarization, and to an anonymous referee of IPEC 2016 for suggesting a simplification in the proof of Theorem~\ref{theorem:planar:protrusion:lowerbound}.

\bibliography{references}

\appendix

\end{document}